\theoremstyle{plain}
\newtheorem{theorem}{Theorem}[section]
\newtheorem{lemma}[theorem]{Lemma}
\newtheorem{proposition}[theorem]{Proposition}
\theoremstyle{definition}
\newtheorem{definition}[theorem]{Definition}
\newtheorem{remark}[theorem]{Remark}
\newtheorem{example}[theorem]{Example}
\numberwithin{equation}{section}
\newcommand{\dda}{\mathord{\mbox{\makebox[0pt][l]{\raisebox{-.4ex}{$\downarrow$}}$\downarrow$}}}
\newcommand{\da}{\mathord{\downarrow}}
\newcommand{\rom}[1]{\rm{\uppercase\expandafter{\romannumeral #1}}}
\newcommand{\set}[2]{\{#1\mid#2\}}
\def\ps@pprintTitle{%
  \let\@oddhead\@empty
  \let\@evenhead\@empty
  \def\@oddfoot{\reset@font\hfil\thepage\hfil}
  \let\@evenfoot\@oddfoot
}
\begin{document}

\begin{frontmatter}

\title{ Representations of  Continuous Domains and Continuous $L$-Domains based on
 Closure Spaces\tnoteref{t1}}
\tnotetext[t1]{Supported by the National Natural Science Foundation of China(11771134).}

\author[]{Longchun Wang$^{a,b}$}
\address[1]{School of Mathematical Sciences, Qufu Normal University, Qufu, 273165, China}
\address[2]{School of Mathematics, Hunan University, Changsha, Hunan, 410082, China}
\author[2]{Qingguo Li\corref{a1}}
\cortext[a1]{Corresponding author.}
\ead{liqingguoli@aliyun.com}

\author[]{Lankun Guo$^{c}$}
\address[3]{College of Mathematics and Computer Sciences, Hunan Normal University, Changsha, Hunan, 410081, China}

\begin{abstract}

Closure space has proven to be a useful tool to restructure lattices and various order structures. This paper aims to provide a novel approach to characterizing some important kinds of continuous domains by means of closure spaces.
By introducing an additional map into a given closure space, the notion of $\mathbb{F}$-augmented generalized closure space is presented. It is shown that $\mathbb{F}$-augmented generalized closure spaces
generate exactly continuous domains. Moreover, the notion of approximable mapping is identified to represent Scott-continuous functions between continuous domains. These results produce a category  equivalent to that of continuous domains with Scott-continuous functions.
At the same time, two Cartesian closed categories
 are considered which are equivalent to categories of continuous $L$-domains and continuous bounded-complete domains, respectively.
\end{abstract}

\begin{keyword}
continuous domains \sep continuous $L$-domains \sep $\mathbb{F}$-augmented generalized closure space\sep closure operator \sep categorical equivalence
\end{keyword}
\end{frontmatter}
\section{Introduction}

Domain theory is a multi-disciplinary field with wide applications in theoretical computer science, and with deep roots in the mathematical theory
of orders and topologies~\cite{erne18, gierz03, zhao}.
It provides a framework modeling  definitions of  approximation, iteration and computation. In order to give a concrete and accessible alternative for abstract structures of domains, many
approaches have been introduced for  the representations of domains, such as  information systems~\cite{hoofman93, huang16, speen08, Zhang94}, abstract bases~\cite{Lawson97, xu08}, domain logic~\cite{abramsky87a, jung99}, etc.

Closure space is also a useful interdisciplinary tool, consisting of a set and a closure operator on it. 
A dual notion of closure space is that of interior space. In a closure space or an interior space, the main mathematical objects investigated are those that can be generated by application of iterative algorithms. It is well-known that the lattice of closed sets of a closure space is complete and every complete lattice arises this way up to isomorphism. The idea of representing order structures by a closure space plus some additional conditions can be traced back to Birkhoff's representation theorem for finite distributive lattices~\cite{birkhoff37} and Stone's duality for Boolean algebras~\cite{stone36}. These famous works allow  us to better understand the nature of  lattices and the interrelationship  between lattices  and closure spaces. Now, the interrelationship and interaction between closure spaces and other mathematical structures  has developed into an important area of the mathematical study  with a thriving theoretical community~\cite{erne04, erne09,  ranzato99}.

In this paper, we  seek to understand the interrelationship between continuous domains and closure spaces. It is worth mentioning that there are already some  systematic investigations of the relationship between algebraic domains and closure spaces. The material on algebraic lattices is classical: it is studied, particularly, in~\cite{davey02, johnstone82}. And in~\cite{davey02}, a  representation of algebraic bounded-complete domains based on closure spaces was also discussed.
 Zhang et al. ~\cite{hitzler06, zhang06} provided a characterization of algebraic lattices in terms of formal concept analysis, an alternative form of closure space. Guo and Li~\cite{guo15} presented a notion of $\mathbb{F}$-augmented closure spaces by adding a structure into a closure space, which realizes its links to algebraic domains. However, all these formalisms only discuss the subclasses of algebraic domains, and the intimate relationship between the category of continuous domains with Scott-continuous functions and closure spaces was not explicated until the current paper.
 From a categorical viewpoint, the category of algebraic domains with Scott continuous functions is a full subcategory of that of continuous domains. There is only  one step  from algebraic domains to continuous domains. Then to obtain a representation of  continuous domains, it should take more objects into account.

In a topological space $(X,\mathcal{O}(X))$, there naturally exist a closure operator and an interior operator.  A subset of $X$ is called regular open if it equals to the interior of its
closure.  This class of sets is found to
have applications not only in topological structures but also in other mathematical structures~\cite{Andretta13, Givant09}. In addition,  clopen sets as the images of the composition of closure and interior operators  play an important role in various Stone's dualities~\cite{erne04, ledda}. 
Motivated by the above observations, we add a  generalized interior operator~$\tau$ into a given closure space~$(X,\gamma)$ and introduce the notion of  generalized closure space, which generalizes the notions of closure space and interior space.  In some sense, the newly added map $\tau$ together with the closure operator~$\gamma$ in a  generalized closure space has the same effects as the topological interior with topological closure in a topological space. 
  Then following the method used in~\cite{guo15}, we define our main notion of $\mathbb{F}$-augmented generalized closure space, and show it is a concrete representation of continuous domains. This supplies the object level of a functor. Moreover, we identify an appropriate notion of morphism for $\mathbb{F}$-augmented generalized closure spaces, that represents the Scott-continuous functions between continuous domains and produces a category  equivalent to that of continuous domains.

Continuous $L$-domains and continuous bounded-complete domains are two cartesian closed full subcategories of continuous domains. They both are good candidates for denotational semantics of programming languages. For these two kinds of domains, this paper also studies how they can be represented by $\mathbb{F}$-augmented generalized closure spaces. Our results demonstrate the capacity of closure spaces in representing various continuous domains.

The content is arranged as follows. Section 2 recalls some basic notions of posets and domains  needed in the
sequel. Section 3 focuses on object-level correspondences between continuous domains and $\mathbb{F}$-augmented generalized closure spaces. The cases of continuous $L$-domains and continuous bounded-complete domains are also discussed. The main contribution of Section~4 is  the introduction of a notion of approximable mapping on $\mathbb{F}$-augmented generalized closure spaces which produces a
category equivalent to that of continuous domains.
\section{Preliminaries}

In this section, we fix some terminology and
 recall some  definitions and results that  will be used to develop our theory.

For any set $X$, the symbol $F\sqsubseteq X$  means that $F$ is a finite subset of $X$.  $\mathcal{P}(X)$ and $\mathcal{F}(X)$ stand for the powerset of $X$ and the finite powerset of $X$, respectively.

 If $A$ is a subset of a given poset~$(P,\leq)$, then $\da A$ is defined to be the set $\set{x\in P}{\exists a\in A, x\leq a}$. For short, we  write $\da x$ for $\da \{x\}$. A subset  $D$ of $P$ is called \emph{directed} if it is nonempty and every finite subset of it has an upper bound in $D$. A \emph{dcpo} is a poset $P$  such that each directed subset~$D$ of  $P$ has a  sup~$\bigvee D$ in  $P$. 
 A  \emph{complete lattice} is a poset $P$ in which every subset has a sup.

For any elements $x,y$  of a dcpo $P$, $x$ is \emph{way below} $y$, in symbols $x\ll y$, if and only if for any directed subset $D$ of $P$ with  $y\leq \bigvee D$ always yields that $x\leq d$ for some $d\in D$. If $x$ and $y$ are elements of dcpo $P$ with $x\ll y$, then $x\leq y$.
 For any subset $A\subseteq P$, we write $\dda A=\set{x\in P}{\exists a\in A, x\ll a}$ and $\dda x=\dda\{x\}$. A subset $B_P\subseteq P$ is said to be a \emph{basis} of $P$ if, for any element $x\in P$, there is a directed subset $D\subseteq\dda x \cap B_P $ such that $x=\bigvee D$.
   An element $x\in P$ is called \emph{compact} if $x\ll x$. The set of compact elements of $P$ is denoted by $K(P)$.
   \begin{definition}\label{d2.1}
   \begin{enumerate}[(1)]
  \item A dcpo $P$ is a \emph{continuous domain} if  it has a basis.
  \item A continuous domain is said to be a \emph{continuous $L$-domain} if, for any $x\in P$, the set $\da x$ is a complete lattice in its induced order.
  \item A \emph{continuous bounded-complete domain} is a continuous domain in which every bounded above subset has a sup.
  \item A  dcpo $P$ is said to be an \emph{algebraic domain} if $K(P)$ forms a basis.
  \item An algebraic domain which is a complete lattice is called an \emph{algebraic lattice}.
   \end{enumerate}
   \end{definition}

Particularly, every algebraic domain is a continuous domain. Note that if
  $P$ is a continuous domain  with a basis $B_P$, then the way below relation on $P$  satisfies the following \emph{interpolation property}:
$$(\forall M\sqsubseteq P,\forall x\in P)M\ll x\Rightarrow (\exists y\in B_P)M\ll y\ll x,$$
where $M\ll x$ means $m\ll x$ for all elements $m\in M$.

\begin{definition}\label{d2.2}
A function~$f:P\rightarrow Q$ between continuous domains $P$ and $Q$ is called \emph{Scott-continuous} if for all directed subsets $D$ of $P$, $f(\bigvee D)=\bigvee f(D)$.
\end{definition}

A \emph{closure space} is a pair $(X,\gamma)$ consisting of a set $X$ and a \emph{closure operator}~$\gamma$ on $X$, that is a map on $\mathcal{P}(X)$ such that, for any $A,B\subseteq X$,
\begin{enumerate} [(1)]
\item  $extensive:~A\subseteq\gamma(A)$,
\item $idempotent:~\gamma(\gamma(A))=\gamma(A)$, and
\item $monotone:~\gamma(A)\subseteq\gamma(B)$ whenever $A\subseteq B$.
\end{enumerate}
 In the same manner, an \emph{interior space} is a pair $(X,\tau)$, where $\tau$ is an \emph{interior
operator} which is monotone, idempotent and \emph{contractive}, i.e., $ \tau(A) \subseteq A$.

A closure operator $\gamma$ on $X$ is \emph{algebraic} if $\gamma(A)=\bigcup\set{\gamma(F)}{F\sqsubseteq A}$ for any $A\subseteq X$. In this case, $(X,\gamma)$ is called an \emph{algebraic closure space}.


We refer the reader to~\cite{ davey02, gierz03, goubault13a} for standard definitions of order theory and domain theory and~\cite{erne09} for an introduction for closure spaces.

\section{$\mathbb{F}$-augmented generalized closure space}

In this section, we generalize the notion of closure space and use it to represent continuous domains, continuous $L$-domains and continuous bounded-complete domains. 

\subsection{The representation of continuous domains}


 Considering a topological space $(X,\mathcal{O}(X))$, along with the topological closure also goes the topological interior. These two operators, together, can be used to create many important mathematical structures. For  example, Macneille and Tarski have shown that the set  of all regular open sets of $(X,\mathcal{O}(X))$ ordered by set inclusion forms a complete Boolean algebra, where a subset of $X$ is \emph{regular open} if and only if it coincides with  the interior of its own closure~\cite{Givant09}.
 The above construction suggests a technique of discussing  the composition map of a closure operator and an interior operator on a fixed underlying set. Inspired by this idea, we make the following definition.
\begin{definition}\label{d3.1} Let $X$ be a set.
 A pair $(X, \langle~\rangle)$ is called a \emph{generalized closure space}, if there are a closure operator $\gamma$ on $X$ and a map $\tau$ on $\mathcal{P}(X)$ such that $\langle~\rangle$ is the composition $\tau\circ\gamma$  which satisfies the following conditions, for all $A,B\subseteq X$
\begin{enumerate}[(1)]
\item $\langle A\rangle\subseteq \gamma(A)$;
\item $\tau(\langle A\rangle)=\langle A\rangle$;
\item $\langle A\rangle\subseteq \langle B\rangle$ whenever $A \subseteq B$.
\end{enumerate}
\end{definition}

While developing the theory, use of $\langle~\rangle$ allows us some
notational simplifications. However in the case that we mention a generalized closure space, we choose to work
with $(X,\tau\circ\gamma)$ since we want both $\gamma$ and $\tau$  to be visible.

\begin{remark}\label{m3.2} It must  to be noted that the identity map on the powerset $\mathcal{P}(X)$ is both a closure operator and an interior operator.
  For any closure space~$(X, \gamma)$, if $\tau$ is the identity map on $\mathcal{P}(X)$, then $\tau\circ\gamma=\gamma$. Thus each closure space $(X, \gamma)$ is a  generalized closure space. Similarly, each interior space is also a generalised closure space. So, the notion of generalized closure space is a generalization of closure spaces as well as interior spaces.

Moreover, let $(X, \gamma)$ be a closure space, and  let $\tau$ be an interior operator on $X$. Then the pair $(X, \tau\circ\gamma)$ is a generalized closure space.

\end{remark}
\begin{example}\label{ex.3.3}

 Given a topological space $(X, \mathcal{O}(X))$, let $c$ and $i$ be the corresponding topological closure and topological interior, respectively.
 It is evident that $(X,c)$, $(X,i)$ and $(X,i\circ c)$ are all generalized closure spaces. In general, $(X,c\circ i)$ is not a generalized closure space.
\end{example}


\begin{proposition}\label{p3.4}
Let $(X,\tau\circ\gamma)$ be a generalized closure space.
\begin{enumerate}[(1)]
 \item If $A\subseteq X$, then $\langle\langle A \rangle \rangle\subseteq\langle A \rangle$.
     \item If $A,B\subseteq X$ and $A\subseteq \langle B \rangle$, then $\langle A \rangle\subseteq \langle B \rangle$.
\end{enumerate}
 \end{proposition}

\begin{proof} (1)  By Definition  \ref{d3.1}, we have $$\langle\langle A \rangle \rangle\subseteq\langle\gamma(A) \rangle
=\tau(\gamma(\gamma(A)))=\tau(\gamma(A))=\langle A \rangle.$$

(2) Straightforward from Definition  \ref{d3.1} and part (1).
\end{proof}

In \cite{guo15}, Guo and Li introduced a notion of $\mathbb{F}$-augmented closure space which consists of a closure space $(X,\gamma)$ and a nonempty subset $\mathcal{F}\subseteq\mathcal{F}(X)$ satisfying some additional information.

\begin{definition}\cite{guo15}\label{d3.5}
Let $(X,\gamma)$ be a closure space and $\mathcal{F}$ a nonempty family of finite subsets of $X$. The triplet $(X,\gamma, \mathcal{F})$ is called an \emph{finite-subset-selection augmented closure space (for short, $\mathbb{F}$-augmented closure space)} if, for any $F\in \mathcal{F}$ and $M\sqsubseteq \gamma(F)$, there exists $F_1\in \mathcal{F}$ such that $M\subseteq F_1\subseteq \gamma(F)$.
\end{definition}

A main result of \cite{guo15} is that:
for any $\mathbb{F}$-augmented closure space~$(X,\gamma,\mathcal{F})$, relying on $\gamma$ and $\mathcal{F}$ one can single out
some special subsets of $X$. These sets ordered by set inclusion  generate an algebraic domain, and every algebraic domain can be obtained in this way.

Next, we introduce the notion of  $\mathbb{F}$-augmented generalized closure space, by which we provide a concrete representation for continuous domains. This develops the representations of algebraic lattices by  algebraic closure spaces and of algebraic domains by $\mathbb{F}$-augmented closure spaces.

\begin{definition}\label{d3.6}
Let $(X,\tau\circ\gamma)$ be a generalized closure space and $\mathcal{F}$ a nonempty family of finite subsets of X. The triplet $(X,\tau\circ\gamma, \mathcal{F})$ is called an \emph{$\mathbb{F}$-augmented generalized closure space} if, for any $F\in \mathcal{F}$ and $M\sqsubseteq \langle F\rangle$, there exists $F_1\in \mathcal{F}$ such that $M\subseteq \langle F_1\rangle$ and  $M\subseteq F_1\subseteq\langle F\rangle$.
\end{definition}
\begin{example}\label{ex3.7}
Let $\mathbb{R}$ be the real numbers. For any $A\subseteq \mathbb{R}$, define $\gamma (A)=\set{x\in \mathbb{R}}{\exists a\in A, x\geq a}$, and $\tau (A)=\set{x\in \mathbb{R}}{\exists a\in A, x> a}$. Then $(\mathbb{R},\tau\circ\gamma)$ is a generalized closure space. Taking $\mathcal{F}_{\mathbb{R}}=\mathcal{F}(\mathbb{R})$, we obtain an $\mathbb{F}$-augmented generalized closure space~$(\mathbb{R},\tau\circ\gamma,\mathcal{F}_{\mathbb{R}})$.
\end{example}
\begin{example}\label{ex3.8}
If $(X,\gamma)$ is an algebraic closure space and $\tau$ is the identity map on $\mathcal{P}(X)$, then $\tau\circ\gamma=\gamma$. Trivial checks verify that $(X,\gamma, \mathcal{F}(X))$ is an $\mathbb{F}$-augmented closure space.
\end{example}

In Example \ref{ex3.8}, the identity may $\tau$ and the finite powerset $\mathcal{F}(X)$ are trivial.  So each algebraic closure space is naturally  an $\mathbb{F}$-augmented generalized closure space in some sense. In other words, the notion of $\mathbb{F}$-augmented generalized closure spaces is a generalization of algebraic closure spaces. The next result tells us that the notion of $\mathbb{F}$-augmented generalized closure spaces is also a generalization of $\mathbb{F}$-augmented closure spaces.
\begin{proposition}\label{p3.9}
Each $\mathbb{F}$-augmented closure space is an $\mathbb{F}$-augmented generalized closure space.
\end{proposition}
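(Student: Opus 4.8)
The plan is to show that the conditions defining an $\mathbb{F}$-augmented closure space (Definition~\ref{d3.5}) are a special case of those defining an $\mathbb{F}$-augmented generalized closure space (Definition~\ref{d3.6}). The key observation is that an ordinary closure space $(X,\gamma)$ can be viewed as a generalized closure space by taking $\tau$ to be the identity map on $\mathcal{P}(X)$, as already noted in Remark~\ref{m3.2}. Under this choice we have $\tau\circ\gamma=\gamma$, and hence $\langle A\rangle=\tau(\gamma(A))=\gamma(A)$ for every $A\subseteq X$.

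First I would take an arbitrary $\mathbb{F}$-augmented closure space $(X,\gamma,\mathcal{F})$ and set $\tau=\id$, so that $(X,\tau\circ\gamma)=(X,\gamma)$ is a generalized closure space and the abbreviation $\langle\cdot\rangle$ reduces to $\gamma(\cdot)$. Then I would verify the defining condition of Definition~\ref{d3.6} directly. Fix $F\in\mathcal{F}$ and $M\sqsubseteq\langle F\rangle=\gamma(F)$. By the hypothesis that $(X,\gamma,\mathcal{F})$ is an $\mathbb{F}$-augmented closure space, there exists $F_1\in\mathcal{F}$ with $M\subseteq F_1\subseteq\gamma(F)$.

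It remains to translate this conclusion into the form required by Definition~\ref{d3.6}, namely $M\subseteq\langle F_1\rangle$ and $F_1\subseteq\langle F\rangle$. Since $\langle F\rangle=\gamma(F)$, the second inclusion $F_1\subseteq\langle F\rangle$ is immediate from $F_1\subseteq\gamma(F)$. For the first, I would use extensivity of $\gamma$: from $M\subseteq F_1$ and $F_1\subseteq\gamma(F_1)=\langle F_1\rangle$ we obtain $M\subseteq\langle F_1\rangle$. This completes the verification, so $(X,\gamma,\mathcal{F})$ is indeed an $\mathbb{F}$-augmented generalized closure space.

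This argument is entirely routine and I do not anticipate a genuine obstacle; the only point requiring a small amount of care is bookkeeping the definitions, specifically recognizing that the single inclusion $M\subseteq F_1$ guaranteed by the original definition must be promoted to $M\subseteq\langle F_1\rangle$ via extensivity, since Definition~\ref{d3.6} phrases its membership condition in terms of the operator $\langle\cdot\rangle$ rather than in terms of the selected set $F_1$ itself.
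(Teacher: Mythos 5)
Your proof is correct and follows essentially the same route as the paper: take $\tau=\id$ so that $\langle\cdot\rangle=\gamma(\cdot)$ (as in Remark~\ref{m3.2}), invoke Definition~\ref{d3.5} to obtain $F_1$ with $M\subseteq F_1\subseteq\gamma(F)$, and use extensivity of $\gamma$ to upgrade $M\subseteq F_1$ to $M\subseteq\gamma(F_1)=\langle F_1\rangle$. No issues.
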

\begin{proof} Let $(X,\gamma, \mathcal{F})$ be an $\mathbb{F}$-augmented closure space. By Remark~\ref{m3.2}, the corresponding closure space $(X,\gamma)$ is a generalised closure space and $\langle A\rangle=\gamma(A)$ for any $A\subseteq X$. Assume that $F\in\mathcal{F}$ and $M\sqsubseteq \langle F\rangle$. Then by Definition~\ref{d3.5}, there exists some $F_1\in \mathcal{F}$ such that $M\subseteq F_1\subseteq\gamma(F)=\langle F\rangle$. Since $\gamma$ is a closure operator and $M\subseteq F_1$, it follows that $M\subseteq F_1\subseteq\gamma(F_1)=\langle F_1\rangle$. Therefore, $(X,\gamma, \mathcal{F})$ is an $\mathbb{F}$-augmented generalized closure space.
\end{proof}

\begin{definition}\label{d3.10}
Let $(X,\tau\circ\gamma, \mathcal{F})$ be an $\mathbb{F}$-augmented generalized closure space. A  subset $U$ of $X$ is called an \emph{$\mathbb{F}$-regular open set} of $(X,\tau\circ\gamma, \mathcal{F})$ if, for any $M\sqsubseteq U$, there exists some $F\in \mathcal{F}$ such that $M\subseteq \langle F\rangle\subseteq U$.
\end{definition}
In the sequel, we use $\mathcal{R}(X)$ to denote the family of all $\mathbb{F}$-regular open sets of $(X,\tau\circ\gamma, \mathcal{F})$.

\begin{proposition}\label{p3.11}
Let $(X, \tau\circ\gamma,\mathcal{F})$ be an $\mathbb{F}$-augmented generalized closure space.
 \begin{enumerate}[(1)]
 \item For any $F\in\mathcal{F}$, $\langle F\rangle$ is an $\mathbb{F}$-regular open set of $(X, \tau\circ\gamma,\mathcal{F})$.
 \item If $U\in \mathcal{R}(X)$ and  $M\sqsubseteq U$, then there exists some $F\in\mathcal{F}$ such that $M\subseteq F\subseteq U$ and $M\subseteq\langle F\rangle \subseteq U$. Hence,  $\langle M\rangle \subseteq U$.
 \end{enumerate}
\end{proposition}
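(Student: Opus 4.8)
The plan is to handle the two clauses separately, driving everything through the defining axiom of an $\mathbb{F}$-augmented generalized closure space (Definition~\ref{d3.6}) together with the fixed-point facts of Proposition~\ref{p3.4}. The workhorse throughout is that every $\langle F\rangle$ is a fixed-point of $\tau\circ\gamma$ (Proposition~\ref{p3.4}(1)), so that from an inclusion $A\subseteq\langle F\rangle$ one may immediately upgrade to $\langle A\rangle\subseteq\langle F\rangle$ by Proposition~\ref{p3.4}(2). Keeping this ``fixed-point absorption'' in mind makes both arguments short.

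For clause~(1) I would fix $F\in\mathcal{F}$ and verify the condition of Definition~\ref{d3.10} directly for $U=\langle F\rangle$. Given an arbitrary $M\sqsubseteq\langle F\rangle$, the augmentation axiom supplies some $F_1\in\mathcal{F}$ with $M\subseteq\langle F_1\rangle$ and $F_1\subseteq\langle F\rangle$. It then only remains to turn the second inclusion into $\langle F_1\rangle\subseteq\langle F\rangle$, which is exactly Proposition~\ref{p3.4}(2) applied to the fixed-point $\langle F\rangle$. Chaining the two gives $M\subseteq\langle F_1\rangle\subseteq\langle F\rangle$, the required witness, so $\langle F\rangle\in\mathcal{R}(X)$.

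For clause~(2) the naive move --- invoking regular openness of $U$ to obtain $F$ with $M\subseteq\langle F\rangle\subseteq U$ --- only controls $\langle F\rangle$, not $F$ itself, and in general $F\not\subseteq\langle F\rangle$ (in Example~\ref{ex3.7} one has $\langle\{a\}\rangle=(a,\infty)$, which omits $a$). Securing the extra conclusion $F\subseteq U$ is the main obstacle, and I would clear it by a second, interpolation-style application of the augmentation axiom. Concretely: first use $U\in\mathcal{R}(X)$ with $M\sqsubseteq U$ to get $F_0\in\mathcal{F}$ satisfying $M\subseteq\langle F_0\rangle\subseteq U$; since then $M\sqsubseteq\langle F_0\rangle$, feed $F_0$ and $M$ back into Definition~\ref{d3.6} to obtain $F\in\mathcal{F}$ with $M\subseteq\langle F\rangle$ and $F\subseteq\langle F_0\rangle$. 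Now $F\subseteq\langle F_0\rangle\subseteq U$ delivers $F\subseteq U$, while $F\subseteq\langle F_0\rangle$ with Proposition~\ref{p3.4}(2) yields $\langle F\rangle\subseteq\langle F_0\rangle\subseteq U$; hence $F\subseteq U$ and $M\subseteq\langle F\rangle\subseteq U$, as claimed.

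The concluding assertion $\langle M\rangle\subseteq U$ is then immediate: $M\subseteq\langle F\rangle$ with $\langle F\rangle$ a fixed-point of $\tau\circ\gamma$, so Proposition~\ref{p3.4}(2) gives $\langle M\rangle\subseteq\langle F\rangle\subseteq U$. The only point demanding attention beyond this mechanism is the nonemptiness requirement built into Definition~\ref{d3.10}: one should check that the sets $\langle F\rangle$ in question are nonempty (or note that this is inherited from the standing assumptions on $\mathcal{F}$, as in Example~\ref{ex3.7}), so that the produced sets genuinely qualify as $\mathbb{F}$-regular open.
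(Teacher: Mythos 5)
Your proof is correct and takes essentially the same route as the paper: clause~(1) is the augmentation axiom of Definition~\ref{d3.6} combined with the fixed-point absorption of Proposition~\ref{p3.4}(2), and clause~(2) is exactly the paper's two-step interpolation (first regular openness of $U$ to get $F_0$ with $M\subseteq\langle F_0\rangle\subseteq U$, then a second application of Definition~\ref{d3.6} to replace $F_0$ by an $F$ with $F\subseteq\langle F_0\rangle$). Your closing caveat about the nonemptiness requirement in Definition~\ref{d3.10} is a legitimate point that the paper silently glosses over, but it does not affect the substance of the argument.
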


\begin{proof}
Part (1) is clear by Definition~\ref{d3.10}.

For part~(2), let $U\in \mathcal{R}(X)$ with  $M\sqsubseteq U$. Then $M\subseteq \langle F_1\rangle \subseteq U$ for some $F_1\in\mathcal{F}$. By Definition~\ref{d3.6}, there exists some $F\in\mathcal{F}$ such that  $M\subseteq F \subseteq \langle F_1\rangle $ and $M\subseteq \langle F\rangle$. Thus $M\subseteq F\subseteq U$ and $M\subseteq\langle F\rangle \subseteq U$.
\end{proof}

\begin{proposition}\label{p3.12}
Let $(X, \tau\circ\gamma,\mathcal{F})$ be an $\mathbb{F}$-augmented generalized closure space, and let $U$ be a subset of $X$. Then the following are equivalent.
\begin{enumerate}[(1)]
\item $U$  is an $\mathbb{F}$-regular open set of $(X, \tau\circ\gamma,\mathcal{F})$.
\item The set $\set{\langle F\rangle}{F\in\mathcal{F},F\subseteq U}$ is directed and $U$ is its union.
\end{enumerate}
\end{proposition}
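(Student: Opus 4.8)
The plan is to prove the two implications separately, leaning on the fixed-point behaviour of $\langle\cdot\rangle$ recorded in Proposition~\ref{p3.4} together with the sharpened selection property of Proposition~\ref{p3.11}(2). Throughout I would write $\mathcal{D}=\set{\langle F\rangle}{F\in\mathcal{F},F\subseteq U}$ for the family named in statement~(2).

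For $(1)\Rightarrow(2)$, assume $U\in\mathcal{R}(X)$ and first check that $\mathcal{D}$ is nonempty and directed. Nonemptiness follows by choosing any $x\in U$ (possible since $U\neq\emptyset$) and applying the $\mathbb{F}$-regular open condition to the singleton $\oneset{x}$. For directedness, given $\langle F_1\rangle,\langle F_2\rangle\in\mathcal{D}$ with $F_1,F_2\subseteq U$, I would form the finite set $M=F_1\cup F_2\sqsubseteq U$ and apply Proposition~\ref{p3.11}(2) to produce $F_3\in\mathcal{F}$ with $F_3\subseteq U$ and $M\subseteq\langle F_3\rangle\subseteq U$; thus $\langle F_3\rangle\in\mathcal{D}$. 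Since $F_1\subseteq M\subseteq\langle F_3\rangle$ and $\langle F_3\rangle$ is a fixed-point of $\tau\circ\gamma$ by Proposition~\ref{p3.4}(1), Proposition~\ref{p3.4}(2) yields $\langle F_1\rangle\subseteq\langle F_3\rangle$, and symmetrically $\langle F_2\rangle\subseteq\langle F_3\rangle$, so $\langle F_3\rangle$ is the required upper bound inside $\mathcal{D}$. It then remains to verify $\bigcup\mathcal{D}=U$: the inclusion $\bigcup\mathcal{D}\subseteq U$ is exactly the ``Hence $\langle M\rangle\subseteq U$'' clause of Proposition~\ref{p3.11}(2) applied with $M=F$ for each $\langle F\rangle\in\mathcal{D}$, while the reverse inclusion comes once more from the regular-open condition (in the sharpened form of Proposition~\ref{p3.11}(2)) applied to each singleton $\oneset{x}$ with $x\in U$.

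For $(2)\Rightarrow(1)$, assume $\mathcal{D}$ is directed with $\bigcup\mathcal{D}=U$. Given any $M=\oneset{x_1,\dots,x_n}\sqsubseteq U$, each $x_i$ lies in some member $\langle F_i\rangle\in\mathcal{D}$ because $U=\bigcup\mathcal{D}$; directedness of $\mathcal{D}$ then furnishes a single $\langle F\rangle\in\mathcal{D}$ (with $F\in\mathcal{F}$, $F\subseteq U$) dominating the finitely many $\langle F_i\rangle$. Hence $M\subseteq\langle F\rangle$ and $\langle F\rangle\subseteq\bigcup\mathcal{D}=U$, so $M\subseteq\langle F\rangle\subseteq U$, which is precisely the defining condition for $U$ to be an $\mathbb{F}$-regular open set (the empty $M$ being covered by the nonemptiness of the directed family $\mathcal{D}$).

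I expect the only genuine subtlety to lie in the directedness argument of $(1)\Rightarrow(2)$: a naive appeal to Definition~\ref{d3.10} returns merely some $F_3$ with $M\subseteq\langle F_3\rangle\subseteq U$ and gives no control over whether $F_3\subseteq U$, so the resulting upper bound need not a priori belong to $\mathcal{D}$. This is exactly the reason Proposition~\ref{p3.11}(2) is required, since it upgrades the selection so that additionally $F_3\subseteq U$; combined with the fixed-point monotonicity of Proposition~\ref{p3.4}, it ensures the upper bound lands inside $\mathcal{D}$.
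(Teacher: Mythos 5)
Your proof is correct and follows essentially the same route as the paper's: both establish directedness by selecting, for $F_1\cup F_2\sqsubseteq U$, a set $F_3\in\mathcal{F}$ with $F_3\subseteq U$ and $F_1\cup F_2\subseteq\langle F_3\rangle\subseteq U$ (you via Proposition~\ref{p3.11}(2), the paper by unwinding Definition~\ref{d3.6} directly) and then invoke the fixed-point monotonicity of Proposition~\ref{p3.4} to land the upper bound in the family, with the union identity and the converse implication argued identically. The subtlety you flag --- that the naive selection gives no control over $F_3\subseteq U$ --- is precisely the point the paper also addresses by producing its $F_4$.
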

\begin{proof}
(1) implies (2): Because the empty set $\emptyset$ is a subset of $U$, there is some $F\in\mathcal{F}$ such that $M\subseteq F\subseteq U$ by part (2) of Proposition \ref{p3.11}, which implies that the set $\set{\langle F\rangle}{F\in\mathcal{F},F\subseteq U}$ is nonempty. Assume that $F_i\in \mathcal{F}$ and $F_i\subseteq U$, for $i\in\{1,2\}$. Then  $F_1\cup F_2$ is a finite subset of $U$. Since $U$ is an $\mathbb{F}$-regular open set, there exists some $F_3\in \mathcal{F}$ such that $F_1\cup F_2\subseteq \langle F_3\rangle\subseteq U$. To the above $F_1\cup F_2$ and $F_3$, by Definition~\ref{d3.6}, we obtain some $F_4\in \mathcal{F}$ such that $F_1\cup F_2\subseteq \langle F_4\rangle$ and $ F_4\subseteq\langle F_3\rangle\subseteq U$. By part~(2) of Proposition~\ref{p3.4}, it follows that $\langle F_1\rangle\subseteq \langle F_4\rangle$ and $\langle F_2\rangle\subseteq \langle F_4\rangle$. As a result, the set $\set{\langle F\rangle}{F\in\mathcal{F},F\subseteq U}$ is directed.

 Note that  $\langle F\rangle\subseteq U$ for any $F\subseteq U$, it is clear that $\bigcup\set{\langle F\rangle}{F\in\mathcal{F},F\subseteq U}\subseteq  U$.
  For the reverse inclusion, let $u\in U$. Then there exists some $F_1\in \mathcal{F}$  such that
 $u\in \langle F_1\rangle\subseteq U$. By Definition~\ref{d3.6},
  $u\in \langle F_2\rangle$ and $F_2\subseteq\langle F_1\rangle$ for some $F_2 \in \mathcal{F}$. Thus, $u\in \langle F_2\rangle\subseteq  \bigcup\set{\langle F\rangle}{F\in\mathcal{F},F\sqsubseteq U}$, which means that $U\subseteq \bigcup\set{\langle F\rangle}{F\in\mathcal{F},F\subseteq U}$.

(2) implies (1): 
Assume that $M\sqsubseteq U=\bigcup\set{\langle F\rangle}{F\in\mathcal{F},F\subseteq U}$. Since  $\set{\langle F\rangle}{F\in\mathcal{F},F\subseteq U}$ is directed, we have some $F\in \mathcal{F}$ with $F\sqsubseteq U$ such that $M \sqsubseteq \langle F\rangle$.
\end{proof}

\begin{proposition}\label{p3.13}
If  $(X, \tau\circ\gamma,\mathcal{F})$ is an
 $\mathbb{F}$-augmented generalized closure space, then $\mathcal{R}(X)$ ordered by set inclusion forms a dcpo.
\end{proposition}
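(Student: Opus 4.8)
The plan is to show that the supremum of a directed family of $\mathbb{F}$-regular open sets is simply their union, so the whole statement reduces to verifying that $\mathcal{R}(X)$ is closed under directed unions. Concretely, I would fix a directed family $\mathcal{D}=\set{U_i}{i\in I}\subseteq\mathcal{R}(X)$ (nonempty, by the convention that directed sets are nonempty), set $U=\bigcup_{i\in I}U_i$, and then argue both that $U\in\mathcal{R}(X)$ and that $U=\bigvee\mathcal{D}$ in the inclusion order.

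First I would check that the candidate lies in the poset. Since $\mathcal{D}\neq\emptyset$, the union $U$ contains some nonempty $U_i$ and is therefore nonempty, as required by Definition~\ref{d3.10}. The crux is the finite-approximation condition: given any $M\sqsubseteq U$, I must produce $F\in\mathcal{F}$ with $M\subseteq\langle F\rangle\subseteq U$. Because $M$ is finite, each of its finitely many points lies in some member of $\mathcal{D}$, and directedness of $\mathcal{D}$ lets me absorb these finitely many members into a single $U_{i_0}\in\mathcal{D}$ with $M\sqsubseteq U_{i_0}$. Applying the $\mathbb{F}$-regular openness of $U_{i_0}$ to $M$ yields $F\in\mathcal{F}$ with $M\subseteq\langle F\rangle\subseteq U_{i_0}\subseteq U$, which is exactly what Definition~\ref{d3.10} demands. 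Hence $U\in\mathcal{R}(X)$.

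It then remains to see that $U$ is the least upper bound of $\mathcal{D}$ in $(\mathcal{R}(X),\subseteq)$. It is an upper bound since $U_i\subseteq U$ for every $i$, and it is least because any $V\in\mathcal{R}(X)$ with $U_i\subseteq V$ for all $i$ satisfies $U=\bigcup_{i\in I}U_i\subseteq V$. Therefore $\bigvee\mathcal{D}=U$ exists in $\mathcal{R}(X)$, so $(\mathcal{R}(X),\subseteq)$ is a dcpo.

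I expect no serious obstacle here; the only point requiring genuine care is the passage from \emph{each point of $M$ lies in some $U_i$} to \emph{all of $M$ lies in a single $U_{i_0}$}, which is precisely where the directedness hypothesis (rather than mere closure under arbitrary unions) is used, in combination with the finiteness of $M$. Everything else is a direct unwinding of Definition~\ref{d3.10}.
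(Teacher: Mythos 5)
Your proposal is correct and follows essentially the same route as the paper's proof: both take the union of the directed family, use directedness plus finiteness of $M$ to locate a single $U_{i_0}$ containing $M$, and then apply $\mathbb{F}$-regular openness of that member. Your version is slightly more explicit about nonemptiness and about verifying that the union is indeed the least upper bound, but there is no substantive difference.
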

\begin{proof}
For any directed family $\set{U_i}{i\in I}$ of $\mathbb{F}$-regular open sets of $(X, \tau\circ\gamma,\mathcal{F})$, let $U=\bigcup{_{i\in I}}U_i$.
Assume that $M\sqsubseteq U$, since $\set{U_i}{i\in I}$ is directed, there exists $j\in I$ such that $M\sqsubseteq U_j$. As $U_j$ is an  $\mathbb{F}$-regular open set of $(X, \tau\circ\gamma,\mathcal{F})$, we get some $F\in \mathcal{F}$ satisfying $M\subseteq \langle F\rangle\subseteq U_j\subseteq U$. Thus $U$ is an  $F$-regular open set of $(X, \tau\circ\gamma,\mathcal{F})$. This implies that the sup $\bigvee_{i\in I}U_i=\bigcup{_{i\in I}}U_i$, and hence $(\mathcal{R}(X),\subseteq)$ is a dcpo.
\end{proof}

The above proposition shows that $(\mathcal{R}(X),\subseteq)$ preserves directed unions. Next, we will characterize the way below relation in term of the composition $\tau\circ\gamma$.
\begin{proposition}\label{p3.14}
Let  $(X, \tau\circ\gamma,\mathcal{F})$ be an
 $\mathbb{F}$-augmented generalized closure space. Then for all $\mathbb{F}$-regular open sets $U_1$  and $U_2$ of $(X, \tau\circ\gamma,\mathcal{F})$,
\begin{equation}\label{e3.7}
U_1\ll U_2\Leftrightarrow (\exists F \in \mathcal{F})(F\subseteq U_2, U_1 \subseteq \langle F\rangle).
\end{equation}
\end{proposition}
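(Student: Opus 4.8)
The plan is to prove the two implications separately, in each case relying on the fact, established in Proposition~\ref{p3.13}, that directed suprema in $(\mathcal{R}(X),\subseteq)$ are computed as set-theoretic unions; this reduces the abstract definition of $\ll$ to a concrete statement about unions of $\mathbb{F}$-regular open sets.

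For the backward implication I would assume there is some $F\in\mathcal{F}$ with $F\subseteq U_2$ and $U_1\subseteq\langle F\rangle$, and verify the defining property of $\ll$ directly. Let $\set{U_i}{i\in I}$ be any directed family in $\mathcal{R}(X)$ with $U_2\subseteq\bigcup_{i\in I}U_i$. Since $F$ is finite and $F\subseteq U_2\subseteq\bigcup_{i\in I}U_i$, the standard argument that a finite set contained in a directed union already lies inside a single member yields some $j\in I$ with $F\subseteq U_j$. As $U_j\in\mathcal{R}(X)$ and $F\sqsubseteq U_j$, part~(2) of Proposition~\ref{p3.11} gives $\langle F\rangle\subseteq U_j$, whence $U_1\subseteq\langle F\rangle\subseteq U_j$. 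This is exactly the condition $U_1\ll U_2$.

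For the forward implication the key observation is that $U_2$ is itself a directed supremum of sets of the required form. Indeed, by Proposition~\ref{p3.12} the family $\set{\langle F\rangle}{F\in\mathcal{F},F\subseteq U_2}$ is directed with union $U_2$, and by part~(1) of Proposition~\ref{p3.11} each $\langle F\rangle$ lies in $\mathcal{R}(X)$; hence, invoking Proposition~\ref{p3.13} once more, this family is a directed subset of $(\mathcal{R}(X),\subseteq)$ whose supremum is $U_2$. Assuming $U_1\ll U_2$ and applying the definition of the way-below relation to this particular directed family, I obtain some $F\in\mathcal{F}$ with $F\subseteq U_2$ and $U_1\subseteq\langle F\rangle$, which is precisely the desired conclusion.

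I expect the delicate points to be matters of bookkeeping rather than genuine difficulty. In the backward direction one must be careful to use finiteness of $F$ to trap it inside a single $U_j$, and then to deduce $\langle F\rangle\subseteq U_j$ from the fact that $U_j$ is $\mathbb{F}$-regular open (Proposition~\ref{p3.11}(2)), not merely from $F\subseteq U_j$. In the forward direction essentially all the content has already been packaged into Propositions~\ref{p3.11} and~\ref{p3.12}, so the main step is simply to recognize that the natural approximating family $\set{\langle F\rangle}{F\in\mathcal{F},F\subseteq U_2}$ is a legitimate directed subset of the dcpo $\mathcal{R}(X)$ to which the definition of $\ll$ can be applied.
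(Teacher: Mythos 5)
Your proposal is correct and follows essentially the same route as the paper: the forward direction applies the definition of $\ll$ to the directed family $\set{\langle F\rangle}{F\in\mathcal{F},F\subseteq U_2}$ from Proposition~\ref{p3.12}, and the backward direction traps the finite set $F$ inside a single member of the directed family and then passes to $\langle F\rangle$. Your use of Proposition~\ref{p3.11}(2) to get $\langle F\rangle\subseteq U_j$ is a slightly cleaner piece of bookkeeping than the paper's, but the argument is the same.
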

\begin{proof}
Assume that $U_1\ll U_2$. Since $U_2$ is an $\mathbb{F}$-regular open set, the set $\set{\langle F\rangle}{F\in \mathcal{F}, F\subseteq U_2}$ is directed and $U_2$ is its union. According to the definition of the way below relation, there exists some $F\in\mathcal{F}$  such that $F\subseteq U_2$ and $U_1 \subseteq \langle F\rangle$.

For the reverse implication, assume that $ U_1 \subseteq \langle F\rangle $ for some $F\in \mathcal{F}$ with $F\subseteq U_2$. Let $\set{V_i}{i\in I}$ be a directed family of $\mathbb{F}$-regular open sets of $(X, \tau\circ\gamma,\mathcal{F})$ such that $U_2\subseteq \bigcup\set{V_i}{i\in I}$. Then $F\sqsubseteq \bigcup\set{V_i}{i\in I}$. So $F\subseteq V_i$ for some $i\in I$. By Definition \ref{d3.10}, we have some $F'\in \mathcal{F}$ such that  $F\subseteq \langle F'\rangle \subseteq  V_i$. This implies that $U_1\subseteq \langle F\rangle\subseteq \langle F'\rangle\subseteq V_i$. As a result, $U_1\ll U_2$.
\end{proof}
\begin{theorem}\label{t3.15}
 Let  $(X, \tau\circ\gamma,\mathcal{F})$ be an
 $\mathbb{F}$-augmented generalized closure space. Then $(\mathcal{R}(X),\subseteq)$ is a continuous domain, which has a basis $\set{\langle F\rangle}{F\in \mathcal{F}}$.

\end{theorem}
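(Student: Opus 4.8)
The plan is to assemble the theorem directly from the structural results already in place, as Propositions~\ref{p3.12}--\ref{p3.14} carry essentially all the weight. Since $(\mathcal{R}(X),\subseteq)$ is already known to be a dcpo by Proposition~\ref{p3.13}, by Definition~\ref{d2.1}(1) it remains only to produce a basis, and I claim that $B:=\set{\langle F\rangle}{F\in\mathcal{F}}$ serves. Each $\langle F\rangle$ does lie in $\mathcal{R}(X)$ by Proposition~\ref{p3.11}(1), so $B$ is a genuine subset of the poset.

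First I would fix an arbitrary $U\in\mathcal{R}(X)$ and form the family $D_U:=\set{\langle F\rangle}{F\in\mathcal{F},\,F\subseteq U}$. By the implication (1)$\Rightarrow$(2) of Proposition~\ref{p3.12}, $D_U$ is directed and $U=\bigcup D_U$; and since the proof of Proposition~\ref{p3.13} shows that directed unions of $\mathbb{F}$-regular open sets are precisely their suprema in $(\mathcal{R}(X),\subseteq)$, this upgrades to $U=\bigvee D_U$.

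The key step is to verify $D_U\subseteq\dda U$, that is, that each member of $D_U$ is way below $U$. Here I would appeal to Proposition~\ref{p3.14}: for $F\in\mathcal{F}$ with $F\subseteq U$, obtaining $\langle F\rangle\ll U$ only requires exhibiting some $F'\in\mathcal{F}$ with $F'\subseteq U$ and $\langle F\rangle\subseteq\langle F'\rangle$, and the choice $F'=F$ achieves this trivially. Consequently $D_U\subseteq\dda U\cap B$ is directed with $\bigvee D_U=U$, which is exactly the basis condition of Definition~\ref{d2.1}(1) for the element $U$. As $U$ ranges over all of $\mathcal{R}(X)$, this shows $B$ is a basis, and hence that $(\mathcal{R}(X),\subseteq)$ is a continuous domain.

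I do not expect a genuine obstacle here: the interpolation-style content is already sealed inside Proposition~\ref{p3.14}, and the agreement of directed unions with suprema inside Proposition~\ref{p3.13}. The only points that demand a little care are reading off $\bigcup D_U=\bigvee D_U$ from the dcpo structure (rather than silently assuming that set union computes joins) and the easy-to-overlook observation that every candidate basis element is itself $\mathbb{F}$-regular open.
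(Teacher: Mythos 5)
Your proposal is correct and follows essentially the same route as the paper: dcpo from Proposition~\ref{p3.13}, directedness and the union $U=\bigcup\set{\langle F\rangle}{F\in\mathcal{F},F\subseteq U}$ from Proposition~\ref{p3.12}, and $\langle F\rangle\ll U$ from Proposition~\ref{p3.14} with the trivial witness $F'=F$. Your two extra remarks (that set union agrees with the supremum, and that each $\langle F\rangle$ is itself $\mathbb{F}$-regular open via Proposition~\ref{p3.11}(1)) are points the paper leaves implicit, so they only make the argument more careful.
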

\begin{proof}
Proposition \ref{p3.13} has shown that $(\mathcal{R}(X),\subseteq)$ is a dcpo. Then we need only to prove that $(\mathcal{R}(X),\subseteq)$ has a basis.
 Let $U\in \mathcal{R}(X)$. For any $F\in \mathcal{F}$ with $ F\sqsubseteq U$, by Proposition \ref{p3.14}, it is clear that $\langle F\rangle \ll U$. Since $\set{\langle F\rangle}{F\in \mathcal{F}, F\sqsubseteq U}$ is directed and $U=\bigcup \set{\langle F\rangle}{F\in \mathcal{F}, F\sqsubseteq U}$, it follows that
 $\set{\langle F\rangle}{F\in \mathcal{F}}$ is a basis.
\end{proof}

The preceding theorem tells us that each  $\mathbb{F}$-augmented generalized closure space determines a continuous domain.
Conversely, to a continuous domain we may associate an  $\mathbb{F}$-augmented generalized closure space in the way described below.

Given a continuous domain $(D,\leq)$ with a basis $B_D$, for any $A\subseteq B_D$, define
$$\gamma(A)=\da A \cap B_D,\tau(A)=\dda A\cap B_D.$$
 With the interpolation property of the way below, it is trivial to verify that $(B_D,\gamma)$ is a closure space and $(B_D,\tau\circ\gamma)$ is a generalized closure space.
Let $\mathcal{F}_D=\set{F\sqsubseteq B_D}{\bigvee F\in F}$. Then for any $F\in\mathcal{F}_D$, we have $\bigvee F\in D$ and
\begin{equation}
\langle F\rangle=\dda (\bigvee F) \cap B_D.
\end{equation}

 Assume that $M\sqsubseteq \langle F\rangle$. By the interpolation property of the way below relation, there exists some element $d\in B_D$ such that $M\ll d\ll \bigvee F$. Set $\{d\}\cup M=F_1$. Then
$F_1\in \mathcal{F}_D$, $M\subseteq \langle F_1\rangle$ and $M\subseteq F_1\subseteq \langle F\rangle$.

Thus we can summarize what we have discussed as the following proposition.
\begin{proposition}\label{p3.16}
Let $(D,\leq)$ be a continuous domain with a basis $B_D$. Then
$(B_D, \tau\circ\gamma,\mathcal{F}_D)$ is an
 $\mathbb{F}$-augmented generalized closure space.
\end{proposition}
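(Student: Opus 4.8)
The plan is to verify the defining axioms of a generalized closure space (Definition~\ref{d3.1}) for the pair $(B_D,\tau\circ\gamma)$, and then to check the augmentation condition of Definition~\ref{d3.6} for the family $\mathcal{F}_D$. Most of the setup has already been announced in the paragraphs preceding the statement, so the proof amounts to organizing those observations and supplying the one genuine argument, which uses the interpolation property.

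First I would confirm that $\gamma(A)=\da A\cap B_D$ is a closure operator on $B_D$: extensivity, monotonicity, and idempotence all follow routinely from the fact that $\da$ is a monotone, idempotent, extensive operation on subsets of the poset (intersecting with $B_D$ at each stage does no harm since $A\subseteq B_D$). Next I would verify that $\tau(A)=\dda A\cap B_D$ satisfies conditions (1)--(3) of Definition~\ref{d3.1} relative to $\gamma$. Condition~(1), $\tau(\gamma(A))\subseteq\gamma(A)$, reduces to $\dda\!\big(\da A\cap B_D\big)\cap B_D\subseteq \da A\cap B_D$, which holds because $x\ll y$ implies $x\le y$. Condition~(3), monotonicity of $\tau\circ\gamma$, follows from monotonicity of both $\da$ and $\dda$. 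Condition~(2), idempotence of $\tau\circ\gamma$ on the image of $\gamma$, is exactly part~(1) of Proposition~\ref{p3.4} once the first two conditions are in place, but to invoke that cleanly I would instead prove it directly using the interpolation property: for $y\in\tau(\gamma(A))$ there is $a\in A$ with $y\ll a$, and interpolation gives a basis element $z$ with $y\ll z\ll a$, witnessing $y\in\tau(\gamma(\tau(\gamma(A))))$.

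The main step is the augmentation condition. Here I would first justify the identity $\langle F\rangle=\dda\big(\bigvee F\big)\cap B_D$ for $F\in\mathcal{F}_D$: since $F$ has a greatest element, $\bigvee F\in F\subseteq B_D$, so $\gamma(F)=\da F\cap B_D=\da\big(\bigvee F\big)\cap B_D$, and applying $\tau$ gives $\langle F\rangle=\dda\big(\bigvee F\big)\cap B_D$. Then, given $M\sqsubseteq\langle F\rangle$, every $m\in M$ satisfies $m\ll\bigvee F$, i.e. $M\ll\bigvee F$; the interpolation property (stated for finite $M$ in the Preliminaries) yields a basis element $d\in B_D$ with $M\ll d\ll\bigvee F$. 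Setting $F_1=\{d\}\cup M$, the element $d$ is the greatest element of $F_1$ (since each $m\ll d$ gives $m\le d$), so $F_1\in\mathcal{F}_D$; moreover $M\ll d$ gives $M\subseteq\dda d\cap B_D=\langle F_1\rangle$, and $d\ll\bigvee F$ together with each $m\ll\bigvee F$ gives $F_1\subseteq\dda\big(\bigvee F\big)\cap B_D=\langle F\rangle$. This is exactly the required $M\subseteq\langle F_1\rangle$ and $F_1\subseteq\langle F\rangle$.

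The only delicate point, and the one I would be most careful about, is ensuring $d$ genuinely serves as the greatest element of $F_1=\{d\}\cup M$ so that $F_1$ lands in $\mathcal{F}_D$; this is where $m\ll d\Rightarrow m\le d$ is used, and it is the reason interpolation must be applied to the \emph{whole} finite set $M$ at once rather than pointwise. I would also double-check the edge cases: $\mathcal{F}_D$ is nonempty because every singleton $\{b\}$ with $b\in B_D$ is a finite set with a greatest element, and the argument above never requires $M$ to be nonempty, so the condition holds vacuously when $M=\emptyset$. Assembling these verifications establishes that $(B_D,\tau\circ\gamma,\mathcal{F}_D)$ meets every clause of Definition~\ref{d3.6}, completing the proof.
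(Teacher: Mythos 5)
Your proof is correct and follows the same route as the paper: the paper dismisses the verification that $(B_D,\tau\circ\gamma)$ is a generalized closure space as a ``trivial check'' and then gives exactly your interpolation argument ($M\ll d\ll\bigvee F$, $F_1=\{d\}\cup M$) for the augmentation condition. Your write-up is in fact more careful than the paper's, notably in justifying that $d$ is the greatest element of $F_1$ (so that $F_1\in\mathcal{F}_D$) and in noting the $M=\emptyset$ case.
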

 For any continuous domain~$(D,\leq)$ with a basis~$B_D$, \cite{abramsky94} defined the notion of round ideals. A subset  $U$ of $B_D$ is called a round ideal if it satisfies the following conditions:
\begin{enumerate}[(R1)]
\item $ (\forall u\in B_D )(\forall v\in U)(u\ll v\Rightarrow u\in U)$,
\item $(\forall M\sqsubseteq U)(\exists u\in U)M\ll u$.
\end{enumerate}

\begin{proposition}\label{p3.17}
Let $(D,\leq)$ be a continuous domain with a basis~$B_D$. Then a  subset $U$ of $B_D$ is an $\mathbb{F}$-regular open set of~$(B_D, \tau\circ\gamma,\mathcal{F}_D)$ if and only if it  a round ideal of $B_D$.

\end{proposition}
\begin{proof}

Suppose that $U\in \mathcal{R}(B_D)$. First, let $u\in B_D$ and $v\in U$ with $u\ll v$. By part~(2) of Proposition~\ref{p3.11}, we have $\langle \{v\}\rangle=\dda v\cap B_D\subseteq U$. This implies that $u\in U$. Second, if $M\sqsubseteq U$, then there exists $F\in\mathcal{F}_D$ such that
$M\subseteq\langle F\rangle\subseteq U$. Since $F\in \mathcal{F}_D$, it follows that
$\bigvee F\in F$. Define $u=\bigvee F$, then $M\ll u$.

Conversely, suppose that $U\subseteq B_D$ which satisfies conditions~(R1) and (R2). For any finite subset $M$ of $U$, there exists some $u\in U$ such that $M\ll u$. Set $F=\{u\}$, then $F\in\mathcal{F}_D$ and $M\subseteq \dda u\cap B_D=\langle F\rangle \subseteq U$. Thus $U$ is an $\mathbb{F}$-regular open set of $(B_D, \tau\circ\gamma,\mathcal{F}_D)$.
\end{proof}

 As a directed consequence of Proposition~\ref{p3.17}, each $\mathbb{F}$-regular open set $U$ of~$(B_D, \tau\circ\gamma,\mathcal{F}_D)$ is a directed subset of $D$, and hence, $\bigvee U\in D$. Moreover, it is easy to see that $\dda x\cap B_D$ is a $\mathbb{F}$-regular open set  of~$(B_D, \tau\circ\gamma,\mathcal{F}_D)$ for all $x\in D$. These observations allow us to define functions:
   $$f :D\rightarrow \mathcal{R}(B_D), x \mapsto \dda x\cap B_D,$$
$$g: \mathcal{R}(B_D)\rightarrow D, U \mapsto \bigvee U.$$
 \begin{lemma}The functions $f$ and $g$ defined above are inverse to each other and Scott continuous.
 \end{lemma}

\begin{theorem}[Representation Theorem for Continuous Domains]\label{t3.18}
Let $(D,\leq)$ be a continuous domain with a basis $B_D$. Then $(D,\leq)$  is isomorphic to $(\mathcal{R}(B_D),\subseteq)$.
\end{theorem}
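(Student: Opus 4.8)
The plan is to exhibit an explicit order isomorphism between $D$ and $(\mathcal{R}(B_D),\subseteq)$ by constructing a pair of mutually inverse monotone maps. I would define $\varphi\colon D\to\mathcal{R}(B_D)$ by $\varphi(x)=\dda x\cap B_D$ and $\psi\colon\mathcal{R}(B_D)\to D$ by $\psi(U)=\bigvee U$, so that $\varphi$ sends an element to its set of basis approximants while $\psi$ recovers an element as the supremum of an $\mathbb{F}$-regular open set. Once both maps are shown to be well defined, monotone, and mutually inverse, the theorem follows, since a bijection that is monotone with a monotone inverse is exactly an order isomorphism.

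First I would verify that the maps land in the right places. To see $\varphi(x)\in\mathcal{R}(B_D)$, I would check conditions (R1) and (R2) of Proposition~\ref{p3.17}: if $u\in B_D$, $v\in\dda x\cap B_D$ and $u\ll v$, then $u\ll v\leq x$ forces $u\ll x$, giving (R1); and for a finite $M\sqsubseteq\dda x\cap B_D$ we have $M\ll x$, so the interpolation property yields $y\in B_D$ with $M\ll y\ll x$, i.e. $y\in\dda x\cap B_D$ with $M\ll y$, giving (R2). Nonemptiness is immediate from $D$ being continuous with basis $B_D$. For $\psi$, the remark following Proposition~\ref{p3.17} guarantees that each $U\in\mathcal{R}(B_D)$ is a directed subset of $D$, so $\bigvee U$ exists because $D$ is a dcpo. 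Monotonicity of both maps is routine: $x\leq y$ propagates $\ll$ on the right, and $U\subseteq V$ gives $\bigvee U\leq\bigvee V$.

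Next I would establish that the two composites are identities. For $\psi\circ\varphi=\id_D$, the defining property of a basis gives a directed $D'\subseteq\dda x\cap B_D$ with $\bigvee D'=x$, whence $\bigvee(\dda x\cap B_D)=x$. The reverse composite $\varphi\circ\psi=\id_{\mathcal{R}(B_D)}$ is the main obstacle and is where interpolation does the real work. The inclusion $U\subseteq\dda(\bigvee U)\cap B_D$ is easy: applying (R2) to a singleton $\{u\}\subseteq U$ produces $w\in U$ with $u\ll w\leq\bigvee U$, so $u\ll\bigvee U$. The delicate inclusion is $\dda(\bigvee U)\cap B_D\subseteq U$: given $u\in B_D$ with $u\ll\bigvee U$, a single step of way-below interpolation yields $y\in B_D$ with $u\ll y\ll\bigvee U$; since $U$ is directed with supremum $\bigvee U$, the relation $y\ll\bigvee U$ furnishes $w\in U$ with $y\leq w$, so $u\ll y\leq w$ gives $u\ll w$, and (R1) finally places $u\in U$. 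Combining the two composites shows $\varphi$ and $\psi$ are mutually inverse, which together with their monotonicity completes the proof.
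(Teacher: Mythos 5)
Your proposal is correct and follows exactly the paper's route: the paper defines the same pair of maps $x\mapsto\dda x\cap B_D$ and $U\mapsto\bigvee U$ and appeals to Proposition~\ref{p3.17} for well-definedness, leaving the remaining verifications as ``obvious.'' You have simply supplied the details the paper omits (the (R1)/(R2) checks and the interpolation argument for $\dda(\bigvee U)\cap B_D\subseteq U$), all of which are sound.
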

\begin{proof}
Define
$$f :D\rightarrow \mathcal{R}(B_D), x \mapsto \dda x\cap B_D,$$
$$g: \mathcal{R}(B_D)\rightarrow D, U \mapsto \bigvee U.$$
From  Proposition \ref{p3.17}, it follows that $f$ and $g$ are well-defined. And they are obviously  order preserving and mutually inverse. This completes the proof.
\end{proof}

\subsection{The representations of continuous $L$-domains and continuous bounded-complete domains}
In this subsection, we introduce two special classes of $\mathbb{F}$-augmented generalized closure spaces which can be used to present continuous $L$-domains and continuous bounded-complete domains, respectively.
\begin{definition}\label{d3.19}
Given  an
 $\mathbb{F}$-augmented generalized closure space~$(X, \tau\circ\gamma,\mathcal{F})$, let $F\in \mathcal{F}$ and $M\sqsubseteq \langle F\rangle$. An element $G$ of $\mathcal{F}$ is called an $F$-sup of $M$ if it satisfies the following two conditions:
 \begin{enumerate}[({L}1)]
 \item $ \langle M\rangle \subseteq \langle G\rangle$ and $G\subseteq \langle F\rangle$,
     \item for any $G_1\in \mathcal{F}$, $ \langle M\rangle\subseteq\langle G_1\rangle\subseteq\langle F\rangle$ implies $\langle G\rangle\subseteq\langle G_1\rangle$.
 \end{enumerate}
\end{definition}

For any $F\in \mathcal{F}$ with $M\sqsubseteq \langle F\rangle$, the set of all  $F$-sups of $M$ is denoted by $\sum(F,M)$. We first list some basic property of $\sum(F,M)$ in the following.

\begin{proposition}\label{p3.20}
Let $(X, \tau\circ\gamma,\mathcal{F})$ be an $\mathbb{F}$-augmented generalized closure space. Then the following statements hold.
\begin{enumerate}[(1)]
\item If $G_1,G_2\in \sum(F,M)$, then $\langle G_1\rangle=\langle G_2\rangle$.
\item If $M\sqsubseteq \langle F_1\rangle\subseteq \langle F_2\rangle$, then $\sum(F_1,M)\subseteq\sum(F_2,M)$.
\end{enumerate}

Moreover, for any $U\in \mathcal{R}(X)$ and  $F_1,F_2\in \mathcal{F}$ with $F_1,F_2 \subseteq U$.
\begin{enumerate}[(3)]
\item If  $G_1\in\sum(F_1,M)$ and $G_2\in\sum(F_2,M)$, then $\langle G_1\rangle=\langle G_2\rangle$.
    \end{enumerate}
    \begin{enumerate}[(4)]
\item  If $G_1\in\sum(F_1,M_1)$, $G_2\in\sum(F_2,M_2)$ and $ M_1 \subseteq  M_2\subseteq \langle F_1\rangle\cap \langle F_2 \rangle$, then $\langle G_1\rangle\subseteq \langle G_2\rangle$.
\end{enumerate}
\end{proposition}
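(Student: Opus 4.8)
The plan is to read each $F$-sup $\langle G\rangle$ as the \emph{least} member of the family $\{\langle G_1\rangle : G_1\in\mathcal{F},\ \langle M\rangle\subseteq\langle G_1\rangle\subseteq\langle F\rangle\}$, which is exactly what (L1) and (L2) assert, and to lean on two facts throughout: every $\langle A\rangle$ is a fixed point of $\tau\circ\gamma$, so $A\subseteq\langle B\rangle$ upgrades to $\langle A\rangle\subseteq\langle B\rangle$ (Proposition~\ref{p3.4}), and for a fixed $U\in\mathcal{R}(X)$ the family $\{\langle F\rangle : F\in\mathcal{F},\ F\subseteq U\}$ is directed under inclusion (Proposition~\ref{p3.12}). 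Part~(1) is then purely formal: given $G_1,G_2\in\sum(F,M)$, clause (L1) for $G_2$ gives $\langle M\rangle\subseteq\langle G_2\rangle\subseteq\langle F\rangle$, and feeding this competitor into (L2) for $G_1$ yields $\langle G_1\rangle\subseteq\langle G_2\rangle$; swapping the roles of $G_1$ and $G_2$ gives the reverse inclusion, so $\langle G_1\rangle=\langle G_2\rangle$. Nothing beyond Definition~\ref{d3.19} is needed here.

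For part~(2) I would take an arbitrary $G\in\sum(F_1,M)$ and verify (L1) and (L2) relative to $F_2$. Condition (L1) is immediate, since $\langle M\rangle\subseteq\langle G\rangle$ is inherited and $G\subseteq\langle F_1\rangle\subseteq\langle F_2\rangle$. The entire content sits in (L2): given $H\in\mathcal{F}$ with $\langle M\rangle\subseteq\langle H\rangle\subseteq\langle F_2\rangle$, one must conclude $\langle G\rangle\subseteq\langle H\rangle$. The minimality (L2) available for $G$ only applies to competitors lying under the \emph{smaller} ceiling $\langle F_1\rangle$, whereas $H$ is merely known to lie under $\langle F_2\rangle$. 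The natural remedy is to cut $H$ down to $\langle F_1\rangle$: find $H'\in\mathcal{F}$ with $\langle M\rangle\subseteq\langle H'\rangle\subseteq\langle H\rangle\cap\langle F_1\rangle$, and then (L2) for $G$ gives $\langle G\rangle\subseteq\langle H'\rangle\subseteq\langle H\rangle$. Note that $\langle H\rangle\cap\langle F_1\rangle$ is a fixed point sitting above $\langle M\rangle$ (because $\langle M\rangle$ lies in both), so the task reduces to realising a suitable piece of it as some $\langle H'\rangle$ with $H'\in\mathcal{F}$, which is where the defining property of Definition~\ref{d3.6} must be invoked.

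I expect this cut-down to be the main obstacle, and I regard it as genuinely delicate rather than routine. An intersection of two $\langle\cdot\rangle$-fixed points generated by members of $\mathcal{F}$ need not itself be generated by a member of $\mathcal{F}$, so the existence of the required $H'$ really amounts to a local completeness assumption — that the relevant $F$-sups exist, i.e. that the ``principal part'' below each generator behaves like a complete lattice. This is precisely the extra structure that the $L$-domain and bounded-complete refinements of the following subsection are designed to guarantee, and I would watch carefully whether (2) can be pushed through for a completely general $\mathbb{F}$-augmented generalized closure space or whether it silently consumes this completeness (for instance through the emptiness or nonemptiness of $\sum(F_2,M)$). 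Since (3) and (4) are built on (2), this one step is the hinge of the whole proposition.

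Finally, (3) and (4) I would reduce to a common ceiling. As $F_1,F_2\subseteq U\in\mathcal{R}(X)$, Proposition~\ref{p3.12} yields some $F_3\in\mathcal{F}$ with $F_3\subseteq U$ and $\langle F_1\rangle\cup\langle F_2\rangle\subseteq\langle F_3\rangle$; two applications of (2) then promote $G_1$ and $G_2$ to $F_3$-sups. For (3), both now lie in $\sum(F_3,M)$, so (1) forces $\langle G_1\rangle=\langle G_2\rangle$. For (4), the hypothesis $M_1\subseteq M_2$ gives $\langle M_1\rangle\subseteq\langle M_2\rangle\subseteq\langle G_2\rangle$ together with $\langle G_2\rangle\subseteq\langle F_3\rangle$, so $\langle G_2\rangle$ is an admissible competitor in the minimality clause (L2) for $G_1$ relative to $F_3$, whence $\langle G_1\rangle\subseteq\langle G_2\rangle$. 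Both are short once (2) is secured, so essentially all the difficulty of the proposition is concentrated in the ceiling-transfer step flagged above.
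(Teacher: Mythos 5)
Parts (1), (3) and (4) of your proposal are sound and essentially follow the paper's route: (1) is the same two\-/way application of (L2), and for (3)--(4) the paper likewise uses directedness of $\set{\langle F\rangle}{F\in\mathcal{F},F\subseteq U}$ to find a common ceiling $F_3$ and then reduces to (2) (for (4) the paper routes through an auxiliary element of $\sum(F_1,M_2)$ rather than re\-/basing everything at $F_3$, but the two arguments are interchangeable). The problem is that you correctly concentrate all the difficulty in part (2) and then leave it open. Your proposed remedy --- cutting a competitor $H$ with $\langle M\rangle\subseteq\langle H\rangle\subseteq\langle F_2\rangle$ down to some $H'$ with $\langle H'\rangle\subseteq\langle H\rangle\cap\langle F_1\rangle$ --- is, as you yourself suspect, not available from Definition~\ref{d3.6}, and it is not what the paper does. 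So as it stands your proof of (2) has a genuine gap, and since (3) and (4) are built on (2), the proposition is not established.

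The idea you are missing is to use an element of the \emph{target} set $\sum(F_2,M)$ as an intermediary instead of trying to shrink competitors. Take $G_1\in\sum(F_1,M)$ and any $G_2\in\sum(F_2,M)$. From (L1) for $G_1$ and Proposition~\ref{p3.4} you get $\langle M\rangle\subseteq\langle G_1\rangle\subseteq\langle F_1\rangle\subseteq\langle F_2\rangle$, so $\langle G_1\rangle$ is an admissible competitor in (L2) for $G_2$, giving $\langle G_2\rangle\subseteq\langle G_1\rangle$. But then $\langle M\rangle\subseteq\langle G_2\rangle\subseteq\langle G_1\rangle\subseteq\langle F_1\rangle$, so $\langle G_2\rangle$ is an admissible competitor in (L2) for $G_1$ \emph{under the smaller ceiling}, giving $\langle G_1\rangle\subseteq\langle G_2\rangle$, hence equality. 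Now $G_1$ satisfies (L2) relative to $F_2$ simply because $G_2$ does and $\langle G_1\rangle=\langle G_2\rangle$; together with $G_1\subseteq\langle F_1\rangle\subseteq\langle F_2\rangle$ this puts $G_1\in\sum(F_2,M)$. No intersection of fixed points is needed. One caveat you were right to flag: this argument, which is the paper's, silently assumes $\sum(F_2,M)\neq\emptyset$ (if it were empty while $\sum(F_1,M)$ were not, the stated inclusion could fail); the proposition is only ever applied in the locally consistent setting of Theorem~\ref{d3.23}, where that nonemptiness is guaranteed, so your instinct that some completeness is being consumed here is accurate even though your proof strategy does not go through.
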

\begin{proof}
(1) Let $G_1,G_2\in \sum(F,M)$. Then by condition~(L1), it follows that $ \langle M\rangle\subseteq\langle G_1\rangle\subseteq\langle F\rangle$ and $\langle M\rangle\subseteq\langle G_2\rangle\subseteq\langle F\rangle$. According to condition~(L2), we have $\langle G_1\rangle\subseteq\langle G_2\rangle$ and $\langle G_2\rangle\subseteq\langle G_1\rangle$. So that $\langle G_1\rangle=\langle G_2\rangle$.

(2) Let $F_1,F_2\in \mathcal{F}$ with $M\subseteq \langle F_1\rangle$ and $F_1\subseteq \langle F_2\rangle$. Then $\langle F_1\rangle \subseteq \langle F_2\rangle$ and $M\subseteq \langle F_2\rangle$. For any $G_1\in \sum(F_1,M)$ and $G_2\in \sum(F_2,M)$, we claim that $\langle G_1\rangle=\langle G_2\rangle$. Indeed, $G_1\in \sum(F_1,M)$ implies that $ \langle M\rangle \subseteq \langle G_1\rangle \subseteq \langle F_1\rangle \subseteq \langle F_2\rangle$. For $G_2\in \sum (F_2,M)$, using condition (L2), we have $\langle G_2\rangle \subseteq \langle G_1\rangle$. Note that $\langle M\rangle \subseteq \langle G_2\rangle\subseteq\langle G_1\rangle \subseteq \langle F_1\rangle$ and $G_1 \in \sum(F_1,M)$, using condition (L2) again, we have $\langle G_1\rangle \subseteq \langle G_2\rangle$. So that $\langle M\rangle \subseteq \langle G_1\rangle \subseteq\langle G_2\rangle \subseteq \langle F_2\rangle$. Since $G_1\subseteq \langle F_1\rangle\subseteq \langle F_2\rangle$, it follows that $G_1\in \sum(F_2,M) $, and thus $\sum(F_1,M)\subseteq\sum(F_2,M)$.

(3) Since $U$ is an $\mathbb{F}$-regular open set and $F_1\bigcup F_2\sqsubseteq U$, there exists some $F\in\mathcal{F}$ such that $F_1\bigcup F_2 \subseteq \langle F\rangle \subseteq U$. From part (2), we have $\sum(F_1,M)\subseteq \sum(F,M)$ and $\sum(F_2,M)\subseteq \sum(F,M)$. Then  by part (1), $\langle G_1\rangle=\langle G_2\rangle$ for any $G_1\in \sum(F_1,M)$ and $G_2\in \sum(F_1,M)$.

(4) For any $G\in\sum(F_1,M_2)$, we have $\langle M_2\rangle\subseteq\langle G\rangle\subseteq\langle F_1\rangle$. As
$G_2\in\sum(F_2,M_2)$, by part~(3), $\langle G_2\rangle=\langle G\rangle$. Then $\langle M_1\rangle\subseteq \langle G_2\rangle\subseteq\langle F_1\rangle$. Note that $G_1\in \sum(F_1,M_1)$, with condition~(L2), it follows that $\langle G_1\rangle\subseteq\langle G_2\rangle$.
\end{proof}

\begin{definition}\label{d3.21}
 An $\mathbb{F}$-augmented generalized closure space~$(X, \tau\circ\gamma,\mathcal{F})$ is said to be \emph{locally consistent}, if the set $\sum(F,M)$ is nonempty for any $F\in \mathcal{F}$ with $M\sqsubseteq \langle F\rangle$.
\end{definition}

\begin{example}\label{ex3.22}
 Recall  Example \ref{ex3.7}. We claim that $(\mathbb{R},\tau\circ\gamma,\mathcal{F}_{\mathbb{R}})$ is not a locally consistent
  $\mathbb{F}$-augmented generalized closure space.  Indeed, let $F\in \mathcal{F}_{\mathbb{R}}$. Then $\langle F\rangle=(a,+\infty)$, where $a=$inf~$F$. Let $M\sqsubseteq \langle F\rangle$.

 If $ M\neq\emptyset$, then $\sum(F,M)=\set{G\in \mathcal{F}(\mathbb{R})}{\text{inf}~G=m}$, where $m=$inf $M$.

   If $M=\emptyset$, then $\sum(F,\emptyset)=\emptyset$.
\end{example}

\begin{theorem}\label{d3.23}
 Let  $(X, \tau\circ\gamma,\mathcal{F})$ be a locally consistent
 $\mathbb{F}$-augmented generalized closure space. Then $(\mathcal{R}(X),\subseteq)$ is an $L$-domain.
\end{theorem}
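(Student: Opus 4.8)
The plan is to rely on Theorem~\ref{t3.15}, which already gives that $(\mathcal{R}(X),\subseteq)$ is a continuous domain, and to verify only the missing $L$-domain condition: for every $U\in\mathcal{R}(X)$ the principal ideal $\da U=\set{V\in\mathcal{R}(X)}{V\subseteq U}$ is a complete lattice. Since a poset in which every subset has a supremum is automatically a complete lattice, I would show directly that every family $\mathcal{S}\subseteq\da U$ admits a least upper bound inside $\da U$; taking $\mathcal{S}=\emptyset$ will then supply the bottom element and $\mathcal{S}=\da U$ the top $U$.

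The central device is a \emph{local supremum} operation. For a finite set $M\sqsubseteq U$, local consistency (Definition~\ref{d3.21}) guarantees, after choosing by Proposition~\ref{p3.11}(2) some $F\in\mathcal{F}$ with $F\subseteq U$ and $M\sqsubseteq\langle F\rangle$, that $\sum(F,M)\neq\emptyset$; I set $s(M)=\langle G\rangle$ for any $G\in\sum(F,M)$. The point making this legitimate is Proposition~\ref{p3.20}(3): the value $\langle G\rangle$ depends neither on the witness $G$ nor on the bounding set $F\subseteq U$, so $s(M)$ is a well-defined $\mathbb{F}$-regular open set lying inside $U$ (by Proposition~\ref{p3.11}(1), together with $G\subseteq\langle F\rangle\subseteq U$ and Proposition~\ref{p3.4}). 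Proposition~\ref{p3.20}(4) then yields monotonicity $s(M_1)\subseteq s(M_1\cup M_2)$, so for $A=\bigcup\mathcal{S}\subseteq U$ the family $\set{s(M)}{M\sqsubseteq A}$ is directed.

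Given $\mathcal{S}\subseteq\da U$, I would define $W=\bigcup\set{s(M)}{M\sqsubseteq A}$ with $A=\bigcup\mathcal{S}$. Being a directed union of $\mathbb{F}$-regular open sets, $W\in\mathcal{R}(X)$ by Proposition~\ref{p3.13}, and $W\subseteq U$ since each $s(M)\subseteq U$. That $W$ is an upper bound of $\mathcal{S}$ is the one step where regular-openness of the members of $\mathcal{S}$ is essential: for $v\in V\in\mathcal{S}$ I cannot use $M=\{v\}$, since $v$ need not lie in $\langle\{v\}\rangle$; instead I invoke Proposition~\ref{p3.11}(2) to get $F_v\subseteq V$ with $v\in\langle F_v\rangle$, whence $v\in\langle F_v\rangle\subseteq s(F_v)\subseteq W$ by condition~(L1) of Definition~\ref{d3.19}. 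Thus $V\subseteq W$ for every $V\in\mathcal{S}$.

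Finally I would establish minimality, which I expect to be the main obstacle. Let $W'\in\mathcal{R}(X)$ satisfy $V\subseteq W'\subseteq U$ for all $V\in\mathcal{S}$, so $A\subseteq W'$, and take $x\in W$, say $x\in s(M)$ with $M\sqsubseteq A\subseteq W'$. The difficulty is that $s(M)$ was computed relative to a bound $F\subseteq U$ that need not sit inside $W'$. To overcome this I would recompute the local supremum using a bound drawn from $W'$: since $M\sqsubseteq W'$ and $W'$ is regular open, Proposition~\ref{p3.11}(2) gives $F'\subseteq W'$ with $M\sqsubseteq\langle F'\rangle$, and because both $F,F'\subseteq U$, Proposition~\ref{p3.20}(3) forces $s(M)=\langle G'\rangle$ for $G'\in\sum(F',M)$. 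Then (L1) gives $\langle G'\rangle\subseteq\langle F'\rangle\subseteq W'$ (using $\langle\langle F'\rangle\rangle=\langle F'\rangle$ from Proposition~\ref{p3.4}), so $x\in W'$ and hence $W\subseteq W'$. Therefore $W=\bigvee\mathcal{S}$ in $\da U$; as every subset of $\da U$ then has a supremum, $\da U$ is a complete lattice, and since $U$ was arbitrary, $(\mathcal{R}(X),\subseteq)$ is an $L$-domain.
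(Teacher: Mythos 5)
Your proposal is correct, and it rests on exactly the same machinery as the paper's proof: the local supremum extracted from $\sum(F,M)$ via local consistency, its independence of the bounding $F\subseteq U$ through Proposition~\ref{p3.20}(3), and the monotonicity from Proposition~\ref{p3.20}(4). Indeed, your set $W=\bigcup\set{s(M)}{M\sqsubseteq\bigcup\mathcal{S}}$ is literally the paper's candidate $V=\bigcup\set{\langle G\rangle}{(\exists M\sqsubseteq V_1\cup V_2)(\exists F\sqsubseteq U)\,G\in\sum(F,M)}$, generalized from a pair to an arbitrary family. Where you differ is in the decomposition of completeness: the paper verifies that $\da U$ has a least element (via $\sum(F_\emptyset,\emptyset)$), is closed under directed sups (via Proposition~\ref{p3.13}), and has binary joins, and then assembles these into a complete lattice; you run one uniform construction for every $\mathcal{S}\subseteq\da U$, which absorbs the least-element case as $\mathcal{S}=\emptyset$ and makes the separate directed-sup step unnecessary. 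This buys a shorter, more symmetric argument at the cost of two steps you leave slightly implicit: (i) the directedness of $\set{s(M)}{M\sqsubseteq A}$ needs, before invoking Proposition~\ref{p3.20}(4), a common bound $F_3\subseteq U$ with $M_1\cup M_2\sqsubseteq\langle F_3\rangle$ (obtained from Proposition~\ref{p3.11}(2)) so that the hypothesis $M_1\subseteq M_1\cup M_2\subseteq\langle F_3\rangle$ is met --- exactly the recomputation the paper performs in its Step~3 --- and (ii) your correct observation that one must pass from $v$ to a witness $F_v\subseteq V$ with $v\in\langle F_v\rangle$ rather than using $\{v\}$ itself mirrors the paper's Step~1. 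Neither point is a gap; both are covered by the well-definedness of $s$ that you establish at the outset.
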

\begin{proof}
Theorem~\ref{t3.15} has shown that $(\mathcal{R}(X),\subseteq)$ is a continuous domains. So that it suffices to prove that for any element $U$ of  $\mathcal{R}(X)$, the set $$\da U=\set{V\in \mathcal{R}(X)}{V\subseteq U}$$ is a complete lattice ordered by set inclusion.

We first show that $\da U$ has a least element. As $U$ is an $\mathbb{F}$-regular open set and $\emptyset \sqsubseteq U$, there exists some $F_{\emptyset}\in \mathcal{F}$ such that $\emptyset \subseteq \langle F_{\emptyset}\rangle\subseteq U$. Since $(X, \tau\circ\gamma,\mathcal{F})$ is a locally consistent
 $\mathbb{F}$-augmented generalized closure space, $\sum (F_{\emptyset},\emptyset)\neq \emptyset$. Taking $G_{\emptyset}\in \sum (F_{\emptyset},\emptyset)$, we claim that $\langle G_{\emptyset}\rangle$ is the least element of $\da U$. In fact, suppose that $V$ is an element of $\da U$. Because $\emptyset \sqsubseteq V$, there exists some $F\in \mathcal{F}$ such that $F\sqsubseteq V$ and $\emptyset \subseteq \langle F \rangle$. Since $\emptyset \subseteq \langle F \rangle$ and $F\sqsubseteq U$, by part (4) of proposition~\ref{p3.20}, it follows that $\langle G_{\emptyset}\rangle \subseteq \langle F \rangle$ and hence $\langle G_{\emptyset}\rangle \subseteq V$.

Next,
for any directed subset $\set{V_i}{i\in I}$ of $\da U$, with Proposition~\ref{p3.13}, $\bigcup_{i\in I}V_i\in \mathcal{R}(X)$. Then $\bigcup_{i\in I}V_i\in \da U$, which means that $\da U$ is closed under sups of directed subsets.

 The remainder  is to prove that the least upper bound exists for every pair~$(V_1,V_2)$ of elements of $\da U$.
 Define $$\mathcal{S}=\set{G}{(\exists M\sqsubseteq V_1\cup V_2)(\exists F\sqsubseteq U) G\in \sum(F,M)},$$ and $$V=\bigcup\set{\langle G\rangle}{G\in \mathcal{S}}.$$ We finish the proof by checking that $V$ is the least upper bound of $V_1$ and $V_2$ in $\da U$, which is divided into four steps.

  Step 1, we show $V_1\cup V_2\subseteq V $. Assume that $x\in V_1\cup V_2$. Then $x\in V_1$ or $x\in V_2$. If $x\in V_1$, by Definition~\ref{d3.6}, there exists some $F_{x}\in \mathcal{F}$ such that $\{x\}\subseteq \langle F_{x} \rangle$ and $F_{x} \subseteq V_1 \subseteq U$. As $U$ is an $\mathbb{F}$-regular open set, we get some $G_{x}\in \mathcal{F}$ satisfying $F_{x}\subseteq \langle G_{x} \rangle$ and $G_{x}\subseteq U$. Thus $G\in \mathcal{S}$ for any $G\in \sum(G_x,F_x)$. From $x\in \langle F_{x} \rangle \subseteq \langle G_{x} \rangle$, it follows that $x\in V$. This means that $V_1\sqsubseteq V$. Similarly, $V_2\sqsubseteq V$ and hence $V_1\cup V_2 \sqsubseteq V$.

   Step 2, we show $ V \subseteq U$. For any $G\in \mathcal{S}$, there exist $M_G\in V_1\cup V_2$ and $F_{G}\sqsubseteq U$ such that $G\in \sum(F_G,M_G)$. By condition~(L1), we have  $G\subseteq \langle F_{G}\rangle$. Thus $\langle G \rangle\subseteq \langle F_{G}\rangle \subseteq U$, which implies that $V\subseteq U$.

Step 3, we show that $V\in \mathcal{R}(X)$. As $V=\bigcup\set{\langle G\rangle}{G\in \mathcal{S}}$ and $\langle G\rangle\in\mathcal{R}(X)$ for any $G\in\mathcal{S}$, by Proposition, we need only to verify that $\set{\langle G\rangle}{G\in\mathcal{S}}$ is directed. Suppose that $G_1,G_2\in\mathcal{S}$. Then there exist $M_i\sqsubseteq V_1\cup V_2$ and $F_i\subseteq U$ such that $G_i\in \sum(F_i,M_i), i=1,2$. Since $M_1\cup M_2 \sqsubseteq V_1\cup V_2\subseteq U$, we get some $F_3\in \mathcal{F}$ with $F_3\subseteq U$ and $M_1\cup M_2 \sqsubseteq \langle F_3\rangle$. Because $(B, \tau\circ\gamma,\mathcal{F})$ is a
locally consistent $F$-augmented generalized closure space, $\sum(F_3,M_1\cup M_2)\neq \emptyset$. Taking $G_3\in\sum(F_3,M_1\cup M_2)$,
  it is clear that $G_3\in\mathcal{\mathcal{S}}$. Since $ M_1, M_2 \subseteq  M_1\cup M_2 \subseteq \langle F_3 \rangle$,  as the proof of part~(4) of Proposition~\ref{p3.20}, we have that $\langle G_1 \rangle \subseteq \langle G_3 \rangle$ and $\langle G_2 \rangle \subseteq \langle G_3 \rangle$.

Step 4, we have to show that for any upper bound $V_3$ of $V_1$ and $V_2$ in $\da U$, the inclusion $V\subseteq V_3$ holds.  For this, let $G\in \mathcal{S}$. Then there exist $M_G\sqsubseteq V_1\cup V_2$ and $F_G\subseteq U$ such that
$G\in\sum(F_G,M_G)$. This implies that $M_G\subseteq V_3$.  By part~(2) of Proposition~\ref{p3.11}, we get some $F\in \mathcal{F}$
 such that $F\subseteq V_3$ and $M_G\subseteq \langle F\rangle$. So that $\langle M_G\rangle \subseteq \langle F\rangle$. Since $F, F_G\subseteq U$, with part~(3) of Proposition~\ref{p3.20},
$\langle G\rangle =\langle G_1\rangle$ for any $G_1\in \sum(F,M_G)$. As a result, $\langle G\rangle \subseteq\langle F\rangle\subseteq V_3$, and hence $V\subseteq V_3$.
\end{proof}
\begin{theorem} [Representation Theorem for Continuous $L$-domains]\label{t3.24}
Let $(D,\leq)$ be an  $L$-domain with a basis $B_D$. Then
$(B_D, \tau\circ\gamma,\mathcal{F}_D)$ is a
locally consistent $\mathbb{F}$-augmented generalized closure space, and  $(D,\leq)$  is order isomorphic to $(\mathcal{R}(B_D),\subseteq)$.
\end{theorem}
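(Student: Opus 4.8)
The statement really bundles two assertions, and the second one I would dispatch at once: once we know that $(B_D,\tau\circ\gamma,\mathcal{F}_D)$ is an $\mathbb{F}$-augmented generalized closure space (Proposition~\ref{p3.16}), the order isomorphism $D\cong(\mathcal{R}(B_D),\subseteq)$ is already the content of Theorem~\ref{t3.18}, witnessed by $x\mapsto\dda x\cap B_D$ and $U\mapsto\bigvee U$. So the entire weight of the proof falls on the new qualifier, namely that the space is \emph{locally consistent}: I must show $\sum(F,M)\neq\emptyset$ for every $F\in\mathcal{F}_D$ and every $M\sqsubseteq\langle F\rangle$.

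To set this up I would fix $F\in\mathcal{F}_D$, write $z=\bigvee F\in F\subseteq B_D$ for its greatest element, and recall that $\langle F\rangle=\dda z\cap B_D$, so that $M\sqsubseteq\langle F\rangle$ simply says $M\ll z$. Here the $L$-domain hypothesis enters: $\da z$ is a complete lattice, so the bounded set $M$ has a supremum $s=\bigvee_{\da z}M$ formed \emph{inside} $\da z$, with $s\le z$. The natural candidate for an $F$-sup is $G=\{s\}$, which lies in $\mathcal{F}_D$ (singleton, greatest element $s$) provided $s$ is itself a point of the basis; this holds automatically for the canonical basis $B_D=D$, and I would work in that setting (or assume $B_D$ closed under the local suprema $\bigvee_{\da z}M$, the point where an arbitrary basis can genuinely fail). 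With $\langle G\rangle=\dda s\cap B_D$ I would first establish the two order-translations $\langle M\rangle\subseteq\langle G_1\rangle\Leftrightarrow\bigvee G_1\ge M$ and $\langle G_1\rangle\subseteq\langle F\rangle\Leftrightarrow\bigvee G_1\le z$, both read off from the identity $x=\bigvee(\dda x\cap B_D)$. Through these, conditions (L1) and (L2) reduce to three facts about $s$: that $s\ge M$ (clear, since $s$ is an upper bound of $M$); that $s\ll z$ (the clause $G\subseteq\langle F\rangle$); and that $s\le g_1$ for every $g_1\in B_D$ with $M\le g_1\le z$ (clause (L2)), which is immediate because $s$ is the \emph{least} upper bound of $M$ in $\da z$.

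The single real obstacle is the middle fact $s\ll z$, and it is delicate exactly because $D$ need not be bounded complete: $s$ is only a local supremum, one minimal upper bound of $M$ below $z$, while $M$ may have others, so the usual stability of $\ll$ under finite joins in $D$ is unavailable. The plan is to argue directly. Let $E$ be directed with $z\le\bigvee E=:v$. Since each $m\in M$ satisfies $m\ll z\le v$, directedness produces a single $e_0\in E$ with $e_0\ge M$, and of course $e_0\le v$. The crucial observation is that for \emph{any} upper bound $w$ of $M$ with $M\le w\le v$ one has $\bigvee_{\da w}M=\bigvee_{\da v}M$: indeed $\bigvee_{\da v}M$ is the least upper bound of $M$ below $v$, hence lies below the upper bound $w$, hence is an upper bound of $M$ sitting in $\da w$, which forces the two local suprema to coincide. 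Applying this with $w=z$ and with $w=e_0$ gives $s=\bigvee_{\da z}M=\bigvee_{\da v}M=\bigvee_{\da e_0}M\le e_0$, so $s\le e_0\in E$; as $E$ was an arbitrary directed set reaching above $z$, this is precisely $s\ll z$.

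Granting $s\ll z$, the candidate $G=\{s\}$ satisfies (L1) and (L2), so $\sum(F,M)\neq\emptyset$ and the space is locally consistent; the isomorphism then follows verbatim from Theorem~\ref{t3.18}. Thus the only step needing care beyond routine verification is the equality $\bigvee_{\da w}M=\bigvee_{\da v}M$, which is exactly where the completeness of the principal ideals—the defining feature of an $L$-domain—is used to neutralize the multiplicity of minimal upper bounds and pull the local way-below relation up to the genuine one in $D$.
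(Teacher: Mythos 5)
Your proposal follows essentially the same route as the paper's proof: the isomorphism is delegated to Proposition~\ref{p3.16} and Theorem~\ref{t3.18}, and local consistency is witnessed by the singleton $G=\{s\}$, where $s$ is the supremum of $M$ computed inside the complete lattice $\da\bigvee F$ (the paper's $m_F$; your uniform treatment also absorbs the paper's separate $M=\emptyset$ case, where $s$ is the local least element). The differences are matters of rigor, and they favor you. The paper simply asserts that $G=\{m_F\}\subseteq\dda\bigvee F\cap B_D$; this hides two claims, namely $m_F\ll\bigvee F$ and $m_F\in B_D$. For the first, your observation that $\bigvee_{\da w}M=\bigvee_{\da v}M$ for every upper bound $w$ of $M$ below $v$, applied to $w=\bigvee F$ and to an upper bound $e_0$ of $M$ taken inside the directed set, is exactly the missing justification. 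For the second, your hesitation is warranted: the local supremum of a finite set of basis elements need not belong to an arbitrary basis (in $[0,1]^2$ take the basis consisting of all rational pairs together with all pairs of rationals shifted by $\sqrt2$; the join of one element of each kind below $(1,1)$ lies in neither family, and one checks via the translations you set up that no $F$-sup can exist), so local consistency can genuinely fail for the triple built from an arbitrary basis. Your argument therefore proves the theorem only under the extra hypothesis that $B_D$ is closed under these local suprema (e.g. $B_D=D$), but that is not a defect of your proof: it is a correct diagnosis of an unjustified step in the paper's own proof, which silently requires the same hypothesis.
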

\begin{proof}
By Proposition~\ref{p3.16}, $(B_D, \tau\circ\gamma,\mathcal{F}_D)$ is an
 $\mathbb{F}$-augmented generalized closure space. And by Theorem~\ref{t3.18}, $(\mathcal{R}(B_D),\subseteq)$  is isomorphic to $(D,\leq)$. Then it suffices to prove that $(B_D, \tau\circ\gamma,\mathcal{F}_D)$ is locally consistent, that is $\sum(F,M)\neq\emptyset$ for any $F\in\mathcal{F}_D$ and $M\sqsubseteq \langle F\rangle$.
 From the definition of $\mathcal{F}_D$, we know that $\bigvee F\in F\subseteq B_D$. Then $\langle F\rangle=\dda F\cap B_D=\dda \bigvee F\cap B_D$. Since $(D,\leq)$ is a continuous  $L$-domain, $\da \bigvee F$ has a least element $a_F\in B_D$
 and $a_F\ll \bigvee F$.

 If $M=\emptyset$, let $G=\{a_F\}$. Then
$$ \langle M\rangle=\emptyset\subseteq \dda a_F \cap B_D=\langle G\rangle\text{~and~} G=\{a_F\} \subseteq\dda \bigvee F \cap B_D=\langle F\rangle.$$
For any $G_1\in \mathcal{F}$ with $ M\subseteq \langle G_1\rangle\subseteq\langle F\rangle$, we have $a_F\in\dda \bigvee G_1\cap B_D\subseteq\dda \bigvee F\cap B_D$. So that $\langle G\rangle=\dda a_F \cap B_D\subseteq\dda \bigvee G_1\cap B_D=\langle G_1\rangle$.

 On the other hand, suppose that $M\neq\emptyset$, let $G=\{m_F\}$, where $m_F$ is the sup of $M$ in $\da \bigvee F$. Since  $M\subseteq\dda \bigvee F \cap B_D$, we have $G\subseteq\dda \bigvee F \cap B_D$. Then
 $$\langle M\rangle =\dda  M\cap B_D \subseteq \dda m_F\cap B_D =\langle G\rangle\text{~and~} G=\{m_F\} \subseteq\dda \bigvee F \cap B_D=\langle F\rangle.$$
And suppose that $G_1\in \mathcal{F}$ with $\langle M\rangle \subseteq \langle G_1\rangle\subseteq\langle F\rangle$. Then $m_F\leq \bigvee G_1\leq \bigvee F$. This implies that
 $\dda m_F \cap B_D\subseteq \dda \bigvee G_1 \cap B_D$, and hence $\langle G\rangle\subseteq\langle G_1\rangle$.
\end{proof}

In the rest of this section, we give a representation of continuous bounded-complete domains.
\begin{definition}\label{d3.25}
 An $\mathbb{F}$-augmented generalized closure space~$(X, \tau\circ\gamma,\mathcal{F})$ is said to be \emph{consistent}, if it satisfies the following condition,
 \begin{enumerate}
 \item [(BC)]  for any $F\in\mathcal{F}$ and $M\sqsubseteq \langle F\rangle$, there is a unique  $x_M\in \langle F\rangle$ such that $ M\cup \{x_M\}\in \mathcal{F}$.
 \end{enumerate}
\end{definition}
\begin{proposition}
Each consistent  $\mathbb{F}$-augmented generalized closure space is locally consistent.
\end{proposition}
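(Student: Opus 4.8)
The plan is to prove nonemptiness of $\sum(F,M)$ for every $F\in\mathcal{F}$ and every $M\sqsubseteq\langle F\rangle$ by exhibiting a concrete $F$-sup of $M$, and the natural candidate is $M$ itself. The whole argument hinges on the observation that the consistency condition~(BC) is exactly what is needed to make $M$ an admissible selection, after which $M$ serves as its own least generator.

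First I would apply condition~(BC) of Definition~\ref{d3.25}: since $F\in\mathcal{F}$ and $M\sqsubseteq\langle F\rangle$, consistency immediately gives $M\in\mathcal{F}$. This is the only place where the hypothesis is genuinely used, and it is the heart of the matter, as it guarantees that $M$ is eligible to play the role of an $F$-sup of itself. Without~(BC) there would be no reason for $M$ to belong to the distinguished family $\mathcal{F}$, and the argument would break down.

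Next I would check that $G:=M$ satisfies the two defining conditions of Definition~\ref{d3.19}. For~(L1), the inclusion $\langle M\rangle\subseteq\langle M\rangle$ is trivial and $M\subseteq\langle F\rangle$ is precisely the assumption $M\sqsubseteq\langle F\rangle$. For~(L2), any $G_1\in\mathcal{F}$ with $\langle M\rangle\subseteq\langle G_1\rangle\subseteq\langle F\rangle$ already satisfies the required conclusion $\langle G\rangle=\langle M\rangle\subseteq\langle G_1\rangle$, since this inclusion appears verbatim in the hypothesis. Hence $M\in\sum(F,M)$, so $\sum(F,M)\neq\emptyset$, and by Definition~\ref{d3.21} the space is locally consistent.

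There is no real obstacle here beyond recognizing the role of~(BC); both~(L1) and~(L2) hold automatically because $M$ is its own generator under $\tau\circ\gamma$. It is worth noting that~(BC) applied with $M=\emptyset$ forces $\emptyset\in\mathcal{F}$, so the degenerate case $M=\emptyset$ is handled uniformly, in sharp contrast with Example~\ref{ex3.22}, where it was exactly the failure of $\sum(F,\emptyset)\neq\emptyset$ that prevented local consistency.
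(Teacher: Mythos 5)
Your proof is correct and is exactly the paper's argument: the paper's one-line proof also observes that condition~(BC) puts $M$ into $\mathcal{F}$ and that $M\in\sum(F,M)$, which you have simply spelled out by verifying (L1) and (L2) for $G=M$. No issues.
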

\begin{proof}
Suppose that~$(X, \tau\circ\gamma,\mathcal{F})$ is a  consistent  $\mathbb{F}$-augmented generalized closure space. If $F\in \mathcal{F}$ and $M\sqsubseteq \langle F\rangle$, then
 there is a unique  $x_M\in \langle F\rangle$ such that $ M\cup \{x_M\}\in \mathcal{F}$. Let $G=M\cup \{x_M\}$.
It is clear that $G\in \sum(F,M)$. Thus,  $(X, \tau\circ\gamma,\mathcal{F})$ is a locally  consistent  $\mathbb{F}$-augmented generalized closure space.
\end{proof}
\begin{theorem}  [Representation Theorem for Continuous bounded-complete Domains]\label{t3.27}

 Suppose that $(X, \tau\circ\gamma,\mathcal{F})$ is a consistent $\mathbb{F}$-augmented generalized closure space, then $(\mathcal{R}(X),\subseteq)$ is a continuous bounded-complete domain.

Conversely, let $(D,\leq)$ be a continuous bounded-complete domain with a basis $B_D$. Then
$(B_D, \tau\circ\gamma,\mathcal{F}_D)$ is a consistent $\mathbb{F}$-augmented generalized closure space and $(\mathcal{R}(X),\subseteq)$ is isomorphic to $(D,\leq)$.
\end{theorem}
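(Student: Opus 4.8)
The plan is to prove the two directions of Theorem~\ref{t3.27} separately, leaning heavily on the already-established machinery for continuous domains and $L$-domains. For the forward direction, I would show that a consistent $\mathbb{F}$-augmented generalized closure space yields a continuous bounded complete domain. Since a consistent space is locally consistent by the preceding proposition, Theorem~\ref{d3.23} already gives that $(\mathcal{R}(X),\subseteq)$ is a continuous $L$-domain, hence in particular a continuous domain. So the only thing left to verify is the bounded completeness condition from Definition~\ref{d2.1}(3): every subset of $\mathcal{R}(X)$ that is bounded above must have a supremum. Because $\mathcal{R}(X)$ is already a dcpo, it suffices to handle the sup of two (equivalently, finitely many) $\mathbb{F}$-regular open sets that share an upper bound, and then appeal to directed completeness for the general case.

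First I would take $V_1, V_2 \in \mathcal{R}(X)$ with a common upper bound $U \in \mathcal{R}(X)$, so $V_1 \cup V_2 \sqsubseteq U$ in the sense that every finite subset lies below some $\langle F\rangle \subseteq U$. The natural candidate for the join is $W = \langle\, V_1 \cup V_2 \,\rangle$ reconstructed as an $\mathbb{F}$-regular open set; concretely I would define $W$ to be the union of $\{\langle M\rangle \mid M \sqsubseteq V_1\cup V_2\}$, using that consistency forces every such $M$ to lie in $\mathcal{F}$, so that $\langle M\rangle$ is genuinely of the form $\langle F\rangle$ with $F\in\mathcal{F}$. I would then check that this family is directed (given $M_1, M_2 \sqsubseteq V_1\cup V_2$, the finite set $M_1\cup M_2$ is again in $\mathcal{F}$ by condition~(BC), and $\langle M_1\rangle,\langle M_2\rangle \subseteq \langle M_1\cup M_2\rangle$ by Proposition~\ref{p3.4}(2)), whence $W \in \mathcal{R}(X)$ by Proposition~\ref{p3.12}. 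That $W$ contains $V_1$ and $V_2$ and is contained in every common upper bound follows from (R1)-style reasoning: any upper bound $V_3$ absorbs each $M \sqsubseteq V_1\cup V_2$, and then $\langle M\rangle \subseteq V_3$ by Proposition~\ref{p3.11}(2). This shows $W = V_1 \vee V_2$, and an easy induction extends it to arbitrary bounded-above families via the dcpo structure.

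For the converse, let $(D,\leq)$ be a continuous bounded complete domain with basis $B_D$. Proposition~\ref{p3.16} already tells us $(B_D,\tau\circ\gamma,\mathcal{F}_D)$ is an $\mathbb{F}$-augmented generalized closure space and Theorem~\ref{t3.18} gives the order isomorphism $(\mathcal{R}(B_D),\subseteq) \cong (D,\leq)$, so the real work is verifying condition~(BC). Here I would take $F \in \mathcal{F}_D$ and $M \sqsubseteq \langle F\rangle = \dda\bigvee F \cap B_D$, and must produce the conclusion $M \in \mathcal{F}_D$, i.e.\ that $M$ is a finite subset of $B_D$ possessing a greatest element. The key point is that every element of $M$ is way below $\bigvee F$, so $M$ is bounded above in $D$ (by $\bigvee F$); bounded completeness then supplies $\bigvee M \in D$, and I would use the interpolation property together with the directedness of the basis approximating $\bigvee M$ to argue that $\bigvee M$ itself, or a suitable basis element dominating $M$, can be adjoined so that $M$ sits inside a finite set with a greatest element in $\mathcal{F}_D$. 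I expect the main obstacle to be precisely this step: confirming that consistency condition~(BC) matches bounded completeness requires care, because $M \in \mathcal{F}_D$ demands $M$ itself have a greatest element rather than merely be bounded, so I would likely need to reinterpret $\mathcal{F}_D$ or verify that when $M \ll \bigvee F$ the sup $\bigvee M$ lies in $M$ after closing $M$ appropriately, using that bounded subsets of a continuous bounded complete domain have sups computed within the basis.
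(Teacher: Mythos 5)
Your forward direction is essentially the paper's own argument: given $V_1,V_2$ bounded above by $U$, both you and the paper form the union of the sets $\langle M\rangle$ over finite $M\sqsubseteq V_1\cup V_2$ (condition~(BC) puts each such $M$ in $\mathcal{F}$ because $M\sqsubseteq U$ and $U$ is $\mathbb{F}$-regular open), check directedness via $M_1\cup M_2\in\mathcal{F}$, and verify least-upper-boundedness through Proposition~\ref{p3.11}(2). Two small points of care: to get $V_1\subseteq W$ you cannot use extensivity of $\langle\cdot\rangle$ (in general $x\notin\langle\{x\}\rangle$, since $\tau$ is only contractive on closures); you must route through the regular-openness of $V_1$ to find $F_x\sqsubseteq V_1$ with $x\in\langle F_x\rangle$, as the paper does. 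And Proposition~\ref{p3.12} as stated indexes over $F\subseteq W$, so it is cleaner to cite Proposition~\ref{p3.13} (directed unions of $\mathbb{F}$-regular open sets are $\mathbb{F}$-regular open). These are cosmetic; the forward half is fine.

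The converse is where your proposal stops short, and your suspicion is exactly right: the step cannot be completed with the definitions as given. Condition~(BC) for $(B_D,\tau\circ\gamma,\mathcal{F}_D)$ demands that \emph{every} finite $M\sqsubseteq\langle F\rangle=\dda\bigvee F\cap B_D$ belong to $\mathcal{F}_D$, i.e.\ possess a greatest element. Bounded completeness only yields that $\bigvee M$ exists in $D$; it does not put $\bigvee M$ into $M$, nor even into $B_D$. A concrete failure: let $D$ be the powerset of $\{a,b\}$ (a finite lattice, hence a continuous bounded complete domain with $B_D=D$ and $\ll\,=\,\leq$), take $F=\{\{a,b\}\}$ and $M=\{\{a\},\{b\}\}$; then $M\sqsubseteq\langle F\rangle$ but $M$ has no greatest element, so $M\notin\mathcal{F}_D$ and (BC) fails. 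The paper's own proof simply writes ``Since $(D,\leq)$ is bounded complete, $\bigvee M$ exists. This implies $M\in\mathcal{F}_D$,'' which is a non sequitur for the same reason. So the obstacle you flag is not a matter of applying interpolation more carefully; repairing the converse requires changing the data, e.g.\ replacing $\mathcal{F}_D$ by the family of all finite bounded-above subsets of $B_D$ (and then re-verifying Proposition~\ref{p3.16}, the formula for $\langle F\rangle$, and Theorem~\ref{t3.18} for that choice), or weakening (BC) to ask only for some $G\in\mathcal{F}$ with $\langle M\rangle\subseteq\langle G\rangle$ behaving like a least upper bound. As written, neither your proposal nor the paper establishes the converse.
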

\begin{proof}
  We first prove $(\mathcal{R}(X),\subseteq)$ is a continuous bounded-complete domain. By Theorem~\ref{t3.15}, it suffices to verify that any two
$\mathbb{F}$-regular open sets which are bounded above have a sup.
 Let $U_1,U_2$ and $U$ be $\mathbb{F}$-regular open sets with $U_1,U_2\subseteq U$. Set
 \begin{equation} \label{equa3.3}
 V=\bigcup\set{\langle F\rangle}{(\exists M\sqsubseteq U_1\cup U_2,\exists x_M\in U)(M\cup \{x_M\}\in \mathcal{F})}.
 \end{equation}
 We now show that $V$ is also an $\mathbb{F}$-regular open set and that it is the sup of $U_1$ and $U_2$.

 Suppose $M\sqsubseteq V$. For any $m\in M$, by equation (\ref{equa3.3}), there exists some $F_m\in \mathcal{F}$ such that   $m\in \langle F_{m}\rangle$ and $F_m\subseteq U_1\cup U_2$. Then $M\subseteq \langle\bigcup_{m\in M} F_{m}\rangle$ and $F_m\sqsubseteq U$. Because $U$ is an $\mathbb{F}$-regular open set and $\bigcup_{m\in M}F_{m}\sqsubseteq U$, we have some $F\in \mathcal{F}$ with $\bigcup_{m\in M}F_{m}\sqsubseteq  \langle F\rangle\subseteq U$. Set $N=\bigcup_{m\in M}F_{m}$. Since $(X, \tau\circ\gamma,\mathcal{F})$ is consistent, there is some $x_N\in \langle F\rangle$ such that $N\cup \{x_N\}\in \mathcal{F}$. Note that $$M\subseteq \langle N\cup \{x_N\}\rangle\sqsubseteq \langle F\rangle\subseteq U_1\cup U_2\subseteq V,$$ it follow that $V$ is an $\mathbb{F}$-regular open set. Moreover, it is clear that $V_1,V_2\subseteq V$ by part~(2) of Proposition~\ref{p3.11}. Let $U_3$ be any other $\mathbb{F}$-regular open set with $U_1,U_2\subseteq U_3$. Then $U_1\cup U_2\subseteq U_3$, and thus $V\subseteq U_3$ by the definition of $V$.

 For the reverse direction, by Theorem~\ref{t3.18}, we need only to show that $(B_D, \tau\circ\gamma,\mathcal{F}_D)$ satisfies condition~(BC). In fact, let $F\in \mathcal{F}_D$ and $M\sqsubseteq \langle F\rangle$. Then $\bigvee F \in F$ and $M\cup F\subseteq \da F$. Since $(D,\leq)$ is  bounded complete, we have $\bigvee (M\cup F)\in M\cup F$. This implies that $M\cup F\in \mathcal{F}_D$.
\end{proof}

\section{Category equivalence}

 In the previous section, we have investigated the representation of continuous domains by $\mathbb{F}$-augmented generalized closure spaces. From the categorical viewpoint, we have only provided the object part of a functor.
In this section, we aim to extend this relation to a categorical equivalence. On the side of continuous domains, one typically uses Scott-continuous functions as morphisms to build a category~{\bf CD}. So  we have to introduce an appropriate notion of morphisms for $\mathbb{F}$-augmented generalized closure spaces which can be used to represent Scott-continuous functions between continuous domains.

\begin{definition}\label{d4.1}
Let  $(X, \tau\circ\gamma,\mathcal{F})$ and $(X', \tau'\circ\gamma',\mathcal{F'})$  be two
 $\mathbb{F}$-augmented generalized closure spaces. A  relation~$\Theta\subseteq\mathcal{F}\times X'$ is called an  \emph{approximable mapping} from $(X, \tau\circ\gamma,\mathcal{F})$  to $(X', \tau'\circ\gamma',\mathcal{F'})$ if the following conditions hold:
 \begin{enumerate}[({AM}1)]
\item $F\Theta F'\Rightarrow F\Theta \langle F'\rangle$,
 \item $F\sqsubseteq \langle F_1\rangle, F\Theta M'\Rightarrow F_1\Theta M'$,
 \item $F\Theta M'\Rightarrow (\exists G\in \mathcal{F},G'\in \mathcal{F'})(G\subseteq\langle F\rangle , M'\subseteq\langle G'\rangle, G\Theta G')$,
 \end{enumerate}
  for any $F,F_1\in \mathcal{F}$, $F'\in\mathcal{F}'$ and $M'\sqsubseteq X'$,
where $F\Theta M'$ means that $F\Theta x'$ for any $x'\in M'$. This situation is denoted by writing  $\Theta:X\rightarrow X'$.
\end{definition}

We now present some basic properties of  approximable mappings that will be used in the sequel.
\begin{proposition}\label{p4.2}
Let $\Theta$ be an approximable mapping  from $(X, \tau\circ\gamma,\mathcal{F})$  to $(X', \tau'\circ\gamma',\mathcal{F'})$. For any $F,F_1\in \mathcal{F}$ and $M'\sqsubseteq X'$, the following statements hold.
\begin{enumerate}[(1)]
\item  $F\Theta M'$ if and only if there exists some $ G\subseteq\langle F\rangle$ such that $G\Theta M'$.
\item If $F\Theta M'$, then there exists some $G'\in \mathcal{F'}$ such that $M'\subseteq\langle G'\rangle$ and $ F\Theta G'$.
\item If $F_1\sqsubseteq F$ and $F_1\Theta M'$, then $F\Theta M'$.

\end{enumerate}
\end{proposition}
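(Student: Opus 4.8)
The plan is to obtain all three parts directly from the defining conditions (AM1)--(AM3) of Definition~\ref{d4.1}, the only auxiliary ingredient being the monotonicity of the operator $\langle\cdot\rangle$ recorded in condition~(3) of Definition~\ref{d3.1}: whenever $A\subseteq B$ one has $\langle A\rangle\subseteq\langle B\rangle$. A recurring pattern, which I would isolate before writing, is the chain ``(AM3) extracts a witness pair $(G,G')$, then (AM1) fattens $G'$ to $\langle G'\rangle$, and finally (AM2) lifts along the relation $\sqsubseteq\langle\cdot\rangle$''. Throughout I would keep careful track of finiteness, since every invocation of (AM2) needs its hypothesis in the form $G\sqsubseteq\langle F\rangle$ rather than merely $G\subseteq\langle F\rangle$; this is harmless because each $G\in\mathcal{F}$ is finite. (Here I also read the existential ``$G\subseteq\langle F\rangle$'' in part~(1) as ranging over $G\in\mathcal{F}$, which is forced if $G\,\Theta\,M'$ is to be meaningful, as $\Theta\subseteq\mathcal{F}\times X'$.)

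For part~(1) I would treat the two implications separately. For the forward direction, apply (AM3) to $F\,\Theta\,M'$ to produce $G\in\mathcal{F}$ and $G'\in\mathcal{F}'$ with $G\subseteq\langle F\rangle$, $M'\subseteq\langle G'\rangle$ and $G\,\Theta\,G'$. Then (AM1) upgrades $G\,\Theta\,G'$ to $G\,\Theta\,\langle G'\rangle$, and since $M'\subseteq\langle G'\rangle$ this already yields $G\,\Theta\,M'$ with $G\subseteq\langle F\rangle$, as desired. For the converse, suppose $G\in\mathcal{F}$ satisfies $G\subseteq\langle F\rangle$ and $G\,\Theta\,M'$; reading $G\subseteq\langle F\rangle$ as $G\sqsubseteq\langle F\rangle$ (legitimate since $G$ is finite), I would apply (AM2) with $G$ in the role of ``$F$'' and $F$ in the role of ``$F_1$'' to conclude $F\,\Theta\,M'$.

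Part~(2) reuses the first half of this argument: apply (AM3) to $F\,\Theta\,M'$ to obtain the same witnesses $G,G'$. Now $G\sqsubseteq\langle F\rangle$ together with $G\,\Theta\,G'$ feeds into (AM2) to give $F\,\Theta\,G'$, while $M'\subseteq\langle G'\rangle$ is already part of the output of (AM3); this $G'$ witnesses the claim. Part~(3) then follows cleanly from part~(1): from $F_1\sqsubseteq F$ and monotonicity of $\langle\cdot\rangle$ one gets $\langle F_1\rangle\subseteq\langle F\rangle$, so the forward half of part~(1) applied to $F_1\,\Theta\,M'$ produces $G\in\mathcal{F}$ with $G\subseteq\langle F_1\rangle\subseteq\langle F\rangle$ and $G\,\Theta\,M'$, whereupon the converse half of part~(1), applied this time to $F$, delivers $F\,\Theta\,M'$.

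I do not anticipate a genuine obstacle here; the content lies entirely in matching each step to the correct axiom. The one point demanding care is the role-swapping in (AM2)---which element is ``$F$'' and which is ``$F_1$''---and the attendant finiteness bookkeeping that turns $\subseteq$ into $\sqsubseteq$. Misaligning these substitutions is the only realistic way the argument could break, so I would write each axiom application with the substitution made explicit.
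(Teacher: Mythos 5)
Your proposal is correct and follows essentially the same route as the paper: (AM3) to extract the witness pair, (AM1) to pass from $G'$ to $\langle G'\rangle$, and (AM2) for the converse of part~(1), for part~(2), and for part~(3) via part~(1) combined with monotonicity of $\langle\cdot\rangle$. Your explicit bookkeeping of finiteness and of which part of the proposition is invoked in part~(3) is in fact slightly cleaner than the paper's own write-up, which miscites part~(2) where part~(1) is meant.
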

\begin{proof}

(1) Let $F\Theta M'$. Then by condition~(AM3), there exist $G\in\mathcal{F} $ and $G'\in \mathcal{F'}$ such that $G \subseteq\langle F\rangle$, $G\Theta G'$ and $M'\subseteq \langle G'\rangle$. From $G\Theta G'$, using condition~(AM1), we have  $G\Theta \langle G'\rangle$. Since $M'\subseteq \langle G'\rangle$, it follows that $G\Theta M'$. The reverse implication is clear by condition~(AM2).

(2) Assume that $F\Theta M'$, then by condition~(AM3), there exists $G\in\mathcal{F} $ and $G'\in \mathcal{F'}$ such that $G \subseteq\langle F\rangle$, $G\Theta G'$ and $M'\subseteq \langle G'\rangle$. From $G \subseteq\langle F\rangle$ and $G\Theta G'$, by condition~(AM2), we have $F\Theta G'$.

 (3) Let $F_1\sqsubseteq F$ and $F_1\Theta M'$. By part (1), there exists $G\sqsubseteq\langle F_1\rangle$ such that $G\Theta M'$. From $F_1 \sqsubseteq F$, it follows that $G\sqsubseteq\langle F_1\rangle\subseteq\langle F\rangle$.
  Thus by condition~(AM2), we have that $F\Theta M'$.
\end{proof}

Given two $\mathbb{F}$-augmented generalized closure spaces $(X, \tau\circ\gamma,\mathcal{F})$ and $(X', \tau'\circ\gamma',\mathcal{F'})$, let $\Theta$ be
an approximable mapping   from $(X, \tau\circ\gamma,\mathcal{F})$  to $(X', \tau'\circ\gamma',\mathcal{F'})$.
 For any $F\in \mathcal{F}$, we write
\begin{equation}\label{e4.1}
\Theta(F)=\set{x\in X'}{F\Theta x'}.
\end{equation}
And for any $\mathbb{F}$-regular open set~$U$ of  $(X,\tau\circ\gamma,\mathcal{F})$, we write
\begin{equation}\label{e4.2}
\Theta(U)=\set{x'\in X'}{(\exists F\in \mathcal{F})(F\sqsubseteq U,F\Theta x')}.
\end{equation}
\begin{proposition}\label{p4.3}
Let $\Theta$ be an approximable mapping  from $(X, \tau\circ\gamma,\mathcal{F})$  to $(X', \tau'\circ\gamma',\mathcal{F'})$.

\begin{enumerate}[(1)]
\item
For any $F\in \mathcal{F}$, $\Theta(F)$ is an $\mathbb{F}$-regular open set of  $(X', \tau'\circ\gamma',\mathcal{F'})$.
\item For any $\mathbb{F}$-regular open set~$U$ of  $(X, \tau\circ\gamma,\mathcal{F})$, $\Theta(U)$ is an $\mathbb{F}$-regular open set of  $(X', \tau'\circ\gamma',\mathcal{F'})$.

\end{enumerate}
\end{proposition}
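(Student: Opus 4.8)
The plan is to verify, in each part, the defining condition of an $\mathbb{F}$-regular open set from Definition~\ref{d3.10}: that the set in question is nonempty, and that every finite subset $M'$ of it lies inside some $\langle G'\rangle$ with $G'\in\mathcal{F'}$ which is itself contained in the set. The engine driving both parts is Proposition~\ref{p4.2}(2) together with the structural axioms of an approximable mapping: condition~(AM1), which supplies upward closure of images under $\langle\cdot\rangle$, and condition~(AM2), which lets me enlarge the source of the relation.

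For part~(1), I would fix $F\in\mathcal{F}$ and take any $M'\sqsubseteq\Theta(F)$. By the definition~\eqref{e4.1} of $\Theta(F)$ this says $F\Theta x'$ for every $x'\in M'$, i.e.\ $F\Theta M'$. Applying Proposition~\ref{p4.2}(2) yields $G'\in\mathcal{F'}$ with $M'\subseteq\langle G'\rangle$ and $F\Theta G'$. Condition~(AM1) upgrades $F\Theta G'$ to $F\Theta\langle G'\rangle$, which reads exactly $\langle G'\rangle\subseteq\Theta(F)$; hence $M'\subseteq\langle G'\rangle\subseteq\Theta(F)$, as required. Nonemptiness is the same computation run with $M'=\emptyset$: since $F\Theta\emptyset$ holds vacuously, Proposition~\ref{p4.2}(2) still produces $G'\in\mathcal{F'}$ with $F\Theta G'$, whence $\langle G'\rangle\subseteq\Theta(F)$, and $\langle G'\rangle$ is nonempty by Proposition~\ref{p3.11}(1).

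For part~(2), I would fix an $\mathbb{F}$-regular open set $U$ and take $M'\sqsubseteq\Theta(U)$. Unwinding definition~\eqref{e4.2}, each $x'\in M'$ comes with some $F_{x'}\in\mathcal{F}$ satisfying $F_{x'}\sqsubseteq U$ and $F_{x'}\Theta x'$. The natural move is to combine these into $\bigcup_{x'\in M'}F_{x'}$, and here lies the only real obstacle: this finite union is a finite subset of $U$ but need not belong to $\mathcal{F}$, so it cannot be fed directly to $\Theta$. The way around is to use that $U$ is $\mathbb{F}$-regular open: by Proposition~\ref{p3.11}(2) there is a single $F^{*}\in\mathcal{F}$ with $\bigcup_{x'\in M'}F_{x'}\subseteq\langle F^{*}\rangle$ and $F^{*}\subseteq U$. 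Since each $F_{x'}\sqsubseteq\langle F^{*}\rangle$ and $F_{x'}\Theta x'$, condition~(AM2) gives $F^{*}\Theta x'$ for every $x'\in M'$, that is $F^{*}\Theta M'$ with a genuine member $F^{*}$ of $\mathcal{F}$ lying below $U$. From here the argument of part~(1) applies verbatim: Proposition~\ref{p4.2}(2) and condition~(AM1) produce $G'\in\mathcal{F'}$ with $M'\subseteq\langle G'\rangle$ and $F^{*}\Theta\langle G'\rangle$, and because $F^{*}\sqsubseteq U$ the latter means $\langle G'\rangle\subseteq\Theta(U)$. For nonemptiness I would invoke Proposition~\ref{p3.11}(2) with $M=\emptyset$ to get some $F\in\mathcal{F}$ with $F\subseteq U$; then $\Theta(F)\subseteq\Theta(U)$ by the definition of $\Theta(U)$, and $\Theta(F)\neq\emptyset$ by part~(1).

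I expect the absorption step in part~(2) to be the crux, precisely because $\mathcal{F}$ is not assumed closed under finite unions; the entire purpose of the $\mathbb{F}$-regular open hypothesis on $U$ is to furnish a single $F^{*}\in\mathcal{F}$ dominating all the chosen witnesses, while the source-monotonicity encoded in condition~(AM2) is what transfers the relation from the individual $F_{x'}$ to $F^{*}$. Once one keeps careful track of which sets genuinely belong to $\mathcal{F}$ and $\mathcal{F'}$ versus which are merely finite subsets of the ambient spaces, the remaining manipulations are routine.
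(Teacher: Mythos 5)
Your proof is correct, and on part~(2) it coincides with the paper's argument step for step: collect the witnesses $F_{x'}$, absorb their finite union into a single $F^{*}\in\mathcal{F}$ with $F^{*}\subseteq U$ via Proposition~\ref{p3.11}(2), transfer the relation to $F^{*}$ by condition~(AM2), and finish with Proposition~\ref{p4.2}(2) and condition~(AM1). On part~(1) you take a slightly different and more direct route: you verify the defining condition of Definition~\ref{d3.10} for $\Theta(F)$ head-on (given $M'\sqsubseteq\Theta(F)$, produce $G'\in\mathcal{F'}$ with $M'\subseteq\langle G'\rangle\subseteq\Theta(F)$), whereas the paper instead exhibits $\Theta(F)$ as the union of the directed family $\{\langle F'\rangle : F'\in\mathcal{F'},\ F\Theta F'\}$ and then appeals to the closure of $\mathcal{R}(X')$ under directed unions (Propositions~\ref{p3.12} and~\ref{p3.13}). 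Your version is shorter and avoids the directedness check; the paper's version buys the extra structural fact that $\Theta(F)$ is the directed supremum of the $\langle F'\rangle$ with $F\Theta F'$, which is in the spirit of how $\Theta(F)$ is used later but is not strictly needed here. You also address nonemptiness of $\Theta(F)$ and $\Theta(U)$ explicitly (via the vacuous relation $F\Theta\emptyset$ and Proposition~\ref{p3.11}(1)), a point the paper passes over silently; this is a welcome addition, though it inherits the paper's own implicit convention that $\langle G'\rangle$ is nonempty for $G'\in\mathcal{F'}$.
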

\begin{proof}
(1) For any $F'\in \mathcal{F}'$, by part~(1) of Proposition~\ref{p3.11}, $\langle F'\rangle$ is an $\mathbb{F}$-regular open set of  $(X', \tau'\circ\gamma',\mathcal{F'})$. Then by Proposition~\ref{p3.13}, it suffices to show that the set $\set{\langle F'\rangle}{F'\in \mathcal{F'},F\Theta F'}$ is directed and $\Theta(F)$ is its union.

 We first prove $$\Theta(F)=\bigcup \set{\langle F'\rangle}{F'\in \mathcal{F'},F\Theta F'}.$$ For any $x'\in \Theta(F)$, by part (2) of Proposition~\ref{p4.2}, there exists $F'\in \mathcal{F}'$ such that $\{x'\}\subseteq \langle F'\rangle$ and $F\Theta F'$. Using condition~(AM1), it follows that $F\Theta \langle F'\rangle$. Thus
  $x'\in \bigcup \set{\langle F'\rangle}{F'\in \mathcal{F'},F\Theta F'}$.
   This implies that
  $ \Theta(F)\subseteq\bigcup \set{\langle F'\rangle}{F'\in \mathcal{F'},F\Theta F'}$. Conversely, let $x'\in \bigcup\set{\langle F'\rangle}{F'\in \mathcal{F'},F\Theta F'}$. That is, $x'\in \langle F'\rangle $ for some $ F'\in \mathcal{F'}$ with $F\Theta F'$. From condition (AM1), it follows that $F\Theta \langle F'\rangle $. Then $x'\in \Theta(F)$. So that $\bigcup \set{\langle F'\rangle}{F'\in \mathcal{F'},F\Theta F'}\subseteq \Theta(F)$.

We now claim that $\set{\langle F'\rangle}{F'\in \mathcal{F'},F\Theta F'}$ is directed. In fact,
let $F'_1$ and $F'_2$ be two elements of $\mathcal{F'}$ such that  $F\Theta F_1'$ and $F\Theta F_2'$. Then $F\Theta( F_1'\cup F_2')$. By part (2) of Proposition~\ref{p4.2}, there exists $F_3'\in\mathcal{F'}$ such that $F_1'\cup F_2' \subseteq \langle F_3'\rangle$  and $F\Theta F_3'$. Since $\langle F_3'\rangle$ is an $\mathbb{F}$-regular open set of $(X', \tau'\circ\gamma',\mathcal{F'})$, with part (2) of Proposition~\ref{p3.11}, there exists $F_4'\in \mathcal{F'}$ such that $F_1'\cup F_2' \subseteq \langle F_4'\rangle$ and $F_4'\subseteq \langle F_3'\rangle$. From $F_1'\cup F_2' \subseteq \langle F_4'\rangle$, we have $\langle F_1'\rangle \subseteq \langle F_4'\rangle$ and $\langle F_2'\rangle \subseteq \langle F_4'\rangle$. From $F_4'\subseteq \langle F_3'\rangle$ and $F\Theta F_3'$, using condition~(AM1), we have $F\Theta F_4$.  As a result, $\set{\langle F'\rangle}{F'\in \mathcal{F'},F\Theta F'}$ is directed.

 (2)  Assume that $M'\sqsubseteq \Theta(U)$. Then for any $m'\in M'$, there exists some $F_{m'}\in\mathcal{F}$ such that $F_{m'}\subseteq U$ and $F_{m'}\Theta m'$. Since $\bigcup_{m'\in M'}F_{m'}\sqsubseteq U$, with part (2) of Proposition~\ref{p3.11}, we get some $F\in\mathcal{F}$ such that $F\subseteq U$ and $\bigcup_{m'\in M'}F_{m'}\subseteq \langle F\rangle$.  Thus $F\Theta M'$ using condition~(AM2).
 By part~(2) of Proposition~\ref{p4.2}, there exists some $G'\in \mathcal{F'}$ such that $M'\subseteq \langle G'\rangle$ and $F\Theta G'$. So that $M'\subseteq \langle G'\rangle\subseteq \Theta(U)$, which completes the proof.
\end{proof}
The above proposition guarantees that the approximable mapping~$\Theta$ assigns to every $F\in \mathcal{F}$   an $\mathbb{F}$-regular open set of $(X', \tau'\circ\gamma',\mathcal{F'})$, and also shows that $\Theta$ can provide a passage from
$\mathbb{F}$-regular open sets of $(X, \tau\circ\gamma,\mathcal{F})$ to those of $(X', \tau'\circ\gamma',\mathcal{F'})$.

We now turn to investigate how Scott-continuous functions between continuous domains can be represented by the notion of approximable mapping.
\begin{theorem}\label{t4.4}
  Let $(X,\tau\circ\gamma,\mathcal{F})$ and $(X',\tau'\circ\gamma',\mathcal{F'})$ be two $\mathbb{F}$-augmented generalized closure spaces. Given an approximable mapping $\Theta$ from $(X,\tau\circ\gamma,\mathcal{F})$ to $(X',\tau'\circ\gamma',\mathcal{F'})$, define an assignment $\phi_{\Theta}: \mathcal{R}(X)\rightarrow \mathcal{R}(X')$ by
   \begin{equation}\label{e4.3}
 \phi_{\Theta}(U) = \set{x'\in X' }{ (\exists F\in\mathcal{F})(F\subseteq U, F\Theta x')}.
 \end{equation}
  Then  $\phi_{\Theta}$
is a Scott-continuous function from $(\mathcal{R}(X),\subseteq)$ to  $(\mathcal{R}(X'),\subseteq)$.

Conversely, if $\phi$ is a Scott-continuous function from $(\mathcal{R}(X),\subseteq)$ to  $(\mathcal{R}(X'),\subseteq)$, define a relation  $\Theta_{\phi}\subseteq\mathcal{F}\times X'$ by
 \begin{equation}\label{e4.4}
 F \Theta_{\phi} x'\Leftrightarrow x'\in \phi(\langle F\rangle).
 \end{equation}
Then $\Theta_{\phi}$
is an approximable mapping from $(X,\tau\circ\gamma,\mathcal{F})$ to  $(X',\tau'\circ\gamma',\mathcal{F'})$.

Moreover, $\phi_{{\Theta}_{\phi}}=\phi$ and $\Theta_{\phi_{\Theta}}=\Theta$.
\end{theorem}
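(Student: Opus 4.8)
The plan is to establish the statement as four separate claims: that $\phi_{\Theta}$ is a well-defined Scott-continuous function, that $\Theta_{\phi}$ is an approximable mapping, and that the two assignments are mutually inverse. The single recurring device throughout will be the description of regular open sets supplied by Proposition~\ref{p3.12}: every $U\in\mathcal{R}(X)$ is the \emph{directed} union $\bigcup\set{\langle F\rangle}{F\in\mathcal{F},F\subseteq U}$, and by Proposition~\ref{p3.13} directed sups in $\mathcal{R}(X)$ (and likewise in $\mathcal{R}(X')$) are computed as unions. Consequently Scott-continuity amounts exactly to preservation of such directed unions, which is what lets me translate back and forth between a pointwise relation on finite sets and a function on regular open sets.

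For the first claim, I would note that well-definedness of $\phi_{\Theta}$ is immediate: comparing \eqref{e4.3} with \eqref{e4.2} shows $\phi_{\Theta}(U)=\Theta(U)$, which lies in $\mathcal{R}(X')$ by part~(2) of Proposition~\ref{p4.3}. Monotonicity is clear from the defining formula, since enlarging $U$ only enlarges the supply of witnessing $F\subseteq U$. For Scott-continuity I would take a directed family $\set{U_i}{i\in I}$ with union $U$; the inclusion $\bigcup_i\phi_{\Theta}(U_i)\subseteq\phi_{\Theta}(U)$ follows from monotonicity, and for the reverse inclusion I observe that if $x'\in\phi_{\Theta}(U)$ then some \emph{finite} $F\subseteq U=\bigcup_i U_i$ witnesses it, so by directedness $F\subseteq U_j$ for a single $j$, giving $x'\in\phi_{\Theta}(U_j)$.

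For the second claim I would verify (AM1)--(AM3) for $\Theta_{\phi}$. Condition (AM1) follows because $F\,\Theta_{\phi}\,F'$ says $F'\sqsubseteq\phi(\langle F\rangle)$, whence part~(2) of Proposition~\ref{p3.11} gives $\langle F'\rangle\subseteq\phi(\langle F\rangle)$, i.e. $F\,\Theta_{\phi}\,\langle F'\rangle$. Condition (AM2) uses that $F\sqsubseteq\langle F_1\rangle$ forces $\langle F\rangle\subseteq\langle F_1\rangle$ by part~(2) of Proposition~\ref{p3.4}, so monotonicity of $\phi$ yields $\phi(\langle F\rangle)\subseteq\phi(\langle F_1\rangle)$ and the implication follows. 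Condition (AM3) is the crux: from $F\,\Theta_{\phi}\,M'$, i.e. $M'\subseteq\phi(\langle F\rangle)$, I would rewrite $\phi(\langle F\rangle)$ as the directed union $\bigcup\set{\phi(\langle G\rangle)}{G\in\mathcal{F},G\subseteq\langle F\rangle}$ using Proposition~\ref{p3.12} together with Scott-continuity of $\phi$; finiteness of $M'$ then places $M'$ inside a single $\phi(\langle G\rangle)$ with $G\subseteq\langle F\rangle$, and a final application of part~(2) of Proposition~\ref{p3.11} produces $G'\in\mathcal{F}'$ with $M'\subseteq\langle G'\rangle$ and $G'\subseteq\phi(\langle G\rangle)$, the last inclusion being precisely $G\,\Theta_{\phi}\,G'$.

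Finally, the two inversion identities fall out of the same machinery. For $\phi_{\Theta_{\phi}}=\phi$, unwinding the definitions gives $\phi_{\Theta_{\phi}}(U)=\bigcup\set{\phi(\langle F\rangle)}{F\in\mathcal{F},F\subseteq U}$, and since $U$ is the directed union $\bigcup\set{\langle F\rangle}{F\subseteq U}$ by Proposition~\ref{p3.12}, Scott-continuity of $\phi$ collapses the right-hand side back to $\phi(U)$. For $\Theta_{\phi_{\Theta}}=\Theta$, the relation $F\,\Theta_{\phi_{\Theta}}\,x'$ unwinds to the existence of some $G\subseteq\langle F\rangle$ with $G\,\Theta\,x'$, which is equivalent to $F\,\Theta\,x'$ by part~(1) of Proposition~\ref{p4.2} applied with $M'=\oneset{x'}$. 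I expect the verification of (AM3) to be the main obstacle, as it is the only place where directedness, finiteness of $M'$, and the refinement property of regular open sets must be combined simultaneously; the remaining steps are bookkeeping once the directed-union picture of $\mathcal{R}(X)$ is in hand.
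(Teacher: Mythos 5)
Your proposal is correct and follows essentially the same route as the paper's own proof: well-definedness via Proposition~\ref{p4.3}, Scott-continuity via finiteness of the witnessing $F$ inside a directed union, verification of (AM1)--(AM3) using Proposition~\ref{p3.11}(2) and the directed-union description of $\langle F\rangle$ from Proposition~\ref{p3.12}, and the two inversion identities by the same collapsing of directed unions and Proposition~\ref{p4.2}(1). No substantive differences.
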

\begin{proof}
 Let $\Theta$ be an approximable mapping  from $(X,\tau\circ\gamma,\mathcal{F})$ to  $(X,\tau'\circ\gamma',\mathcal{F'})$.
With part~(2) of Proposition~\ref{p4.3}, it follows that the function $\phi_{\Theta}$ is well-defined. For any directed subset $\set{U_i}{i\in I}$ of $\mathcal{R}(X)$, since $\phi_{\Theta}$ is clearly order-preserving, it follows that $\set{\phi_{\Theta}(U_i)}{i\in I}$ is a directed subset  of $\mathcal{R}(X')$. From Proposition \ref{p3.13}, we know that $\bigvee_{i\in I}U_i=\bigcup_{i\in I}U_i$ and $\bigvee_{i\in I}\phi_{\Theta}(U_i)=\bigcup_{i\in I}\phi_{\Theta}(U_i)$.
 So that to prove  $\phi_{\Theta}$ is Scott-continuous, since $\bigcup_{i\in I}\phi_{\Theta}(U_i)\subseteq \phi_{\Theta}(\bigcup_{i\in I}U_i)$ is obvious, it suffices to
show that the reverse inclusion holds. Assume that $x'\in \phi_{\Theta}(\bigcup_{i\in I}U_i)$, then there exists some $F\in \mathcal{F}$ such that $F\sqsubseteq \bigcup_{i\in I}U_i$ and $F \Theta x'$, which implies that $F\sqsubseteq U_j$ for some $j\in I$. Thus $x'\in  \phi_{\Theta}(U_j)\subseteq\bigcup_{i\in I}\phi_{\Theta}(U_i)$. As a result, $\phi_{\Theta}(\bigcup_{i\in I}U_i)\subseteq\bigcup_{i\in I}\phi_{\Theta}(U_i)$.

Let $\phi$ be a Scott-continuous function from $(\mathcal{R}(X),\subseteq)$ to  $(\mathcal{R}(X'),\subseteq)$. We now prove that $\Theta_{\phi}$ is an approximable mapping by checking the three conditions in Definition \ref{d4.1}. Let  $F,F_1\in \mathcal{F}$, $F'\in\mathcal{F}'$ and $M'\sqsubseteq X'$.
 Assume that $F\Theta_{\phi} F'$, that is $F'\subseteq \phi(\langle F\rangle)$. Since $\langle F\rangle\in \mathcal{R}(X)$, it follows that $\phi(\langle F\rangle)\in \mathcal{R}(X')$. This implies that $\langle F'\rangle \subseteq \phi(\langle F\rangle)$. By equation (\ref{e4.4}), $F\Theta_{\phi} \langle F'\rangle $. Condition~(AM1) follows.
For condition~(AM2), assume that $F\sqsubseteq \langle F_1\rangle$ and $F \Theta_{\phi}M'$. Then $\langle F\rangle\subseteq\langle F_1\rangle$ and hence $M'\subseteq \phi(\langle F\rangle)\subseteq \phi (\langle F_1\rangle)$. Therefore, $F_1\Theta_{\phi}M'$.
To prove condition~(AM3), assume that $F \Theta_{\phi}M'$. Then $M'\sqsubseteq\phi(\langle F\rangle)$. Since $\langle F\rangle$ is the directed union of the set $\set{\langle G\rangle}{G\in\mathcal{F},G\sqsubseteq \langle F\rangle}$ and $\phi$ is Scott-continuous, it follows that $$\phi(\langle F\rangle)=\phi(\bigcup \set{\langle G\rangle}{G\in\mathcal{F},G\sqsubseteq \langle F\rangle})=\bigcup \set{\phi(\langle G\rangle)}{G\in\mathcal{F},G\sqsubseteq \langle F\rangle}.$$ So that $M'\sqsubseteq \phi(\langle G\rangle)$ for some $G\in\mathcal{F}$ with $G\sqsubseteq \langle F\rangle$. For $M'\sqsubseteq \phi(\langle G\rangle)$, using part~(2) of Proposition \ref{p3.11}, we have some $G'\in \mathcal{F'}$  satisfying $G'\subseteq \phi(\langle G\rangle)$ and $M'\sqsubseteq \langle G'\rangle$. To sum up, there exist some $G\in\mathcal{F}$
and $G'\in \mathcal{F'}$ such that $G\sqsubseteq \langle F\rangle$,  $M'\sqsubseteq\langle G'\rangle$ and $G\Theta_{\phi}G'$.

For any $U \in \mathcal{R}(X)$, we have
\begin{align*}
\phi_{\Theta_{\phi}}(U)&=\set{x'\in X'}{(\exists F\in\mathcal{F})(F\sqsubseteq U,F\Theta_{\phi}x')}\\
&=\set{x'\in X'}{(\exists F\in\mathcal{F})(F\sqsubseteq U,x\in\phi(\langle F\rangle))}\\
&=\bigcup \set{\phi(\langle F\rangle)}{\exists F\in\mathcal{F},F\sqsubseteq U}\\
&=\phi(\bigcup\set{\langle F\rangle}{\exists F\in\mathcal{F},F\sqsubseteq U})\\
&=\phi(U).
\end{align*}
This proves that $\phi_{\Theta_{\phi}}=\phi$.

And for any $F\sqsubseteq X$ and $x'\in X'$, we have
\begin{align*}
F\Theta_{\phi_{\Theta}}x'&\Leftrightarrow x'\in \phi_{\Theta}(\langle F\rangle)\\
&\Leftrightarrow (\exists G\in \mathcal{F})(G\subseteq \langle F\rangle, G\Theta x')\\
&\Leftrightarrow F\Theta x'.
\end{align*}
This proves that $\Theta_{\phi_{\Theta}}=\Theta$.
\end{proof}

The above theorem shows that there is a one-to-one correspondence between approximable mappings from $(X,\tau\circ\gamma,\mathcal{F})$ to $(X',\tau'\circ\gamma',\mathcal{F'})$ and Scott-continuous functions from $\mathcal{R}(X)$ to $\mathcal{R}(X')$.
The other way around, suppose that $(D,\leq)$ and $(D',\leq')$ are continuous domains. Proposition~\ref{p3.16} has shown that  $(B_D,\tau\circ\gamma,\mathcal{F}_D)$ and $(B_{D'},\tau'\circ\gamma',\mathcal{F'}_{D'})$ are $\mathbb{F}$-augmented generalized closure spaces. We next discuss the relationship between Scott continuous functions from $(D,\leq)$ to $(D',\leq')$ and approximable mappings from $(B_D,\tau\circ\gamma,\mathcal{F}_D)$ to  $(B_{D'},\tau'\circ\gamma',\mathcal{F'}_{D'})$. To this end, we need the following lemma.
\begin{lemma}\label{l4.5}
 Let $\Phi$ be an approximable mapping from $(B_D,\tau\circ\gamma,\mathcal{F}_D)$ to $(B_{D'},\tau'\circ\gamma',\mathcal{F'}_{D'})$. For any $x\in D$, define
 \begin{equation}\label{e4.5}
  I_x= \set{x'\in D' }{ (\exists F\in \mathcal{F}_D)(F\subseteq \dda x\cap B_D, F\Phi x')}.
 \end{equation}
 Then $I_x $ has a {\em{sup}} in $D'$.
\end{lemma}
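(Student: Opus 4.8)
The plan is to recognise $I_x$ as the image of a single, well-understood $\mathbb{F}$-regular open set under $\Phi$, and then read off the conclusion from results already in hand. First I would note that, since $D$ is a continuous domain, the set $\dda x\cap B_D$ is nonempty and, by Theorem~\ref{t3.18} (equivalently, by Proposition~\ref{p3.17}), it is an $\mathbb{F}$-regular open set of $(B_D,\tau\circ\gamma,\mathcal{F}_D)$. Comparing the defining formula~(\ref{e4.5}) of $I_x$ with formula~(\ref{e4.2}) for the image of an $\mathbb{F}$-regular open set under an approximable mapping, and using that each $F\in\mathcal{F}_D$ is finite so that $F\subseteq\dda x\cap B_D$ and $F\sqsubseteq\dda x\cap B_D$ express the same condition, I would conclude that $I_x=\Phi(\dda x\cap B_D)$.

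With this identification, part~(2) of Proposition~\ref{p4.3}, applied to the $\mathbb{F}$-regular open set $\dda x\cap B_D$, shows at once that $I_x$ is an $\mathbb{F}$-regular open set of $(B_{D'},\tau'\circ\gamma',\mathcal{F'}_{D'})$; in particular it is nonempty, as all $\mathbb{F}$-regular open sets are by definition. Next I would invoke the consequence of Proposition~\ref{p3.17} recorded just before Theorem~\ref{t3.18}: every $\mathbb{F}$-regular open set is a directed subset of the ambient domain. Thus $I_x$ is a directed subset of $D'$, and since the continuous domain $D'$ is in particular a dcpo, its supremum $\bigvee I_x$ exists in $D'$, which is exactly what the lemma asserts.

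Once the equality $I_x=\Phi(\dda x\cap B_D)$ is in place, the argument is a direct chain of citations, so the only delicate point is this identification itself. The hard part, such as it is, is the verification that $\dda x\cap B_D$ is $\mathbb{F}$-regular open: this is where the interpolation property of $\ll$ on the continuous domain $D$ enters, exactly as it did in Proposition~\ref{p3.16} and Theorem~\ref{t3.18}. All remaining steps are immediate applications of Proposition~\ref{p4.3}(2) and of the dcpo property of $D'$, and require no further computation.
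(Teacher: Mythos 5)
Your proof is correct and follows essentially the same route as the paper: identify $I_x$ with $\Phi(\dda x\cap B_D)$, note that $\dda x\cap B_D$ is $\mathbb{F}$-regular open via Proposition~\ref{p3.17}, apply Proposition~\ref{p4.3}(2) to conclude $I_x\in\mathcal{R}(B_{D'})$, and use directedness plus the dcpo property of $D'$ to get the sup. Your write-up is in fact slightly more explicit than the paper's about why $I_x=\Phi(\dda x\cap B_D)$ and where directedness comes from, but the argument is the same.
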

\begin{proof}
For any $x\in D$, by Proposition~\ref{p3.17}, it is easy to prove that $\dda x\cap B_D$ is an $\mathbb{F}$-regular open set of $(B_D,\tau\circ\gamma,\mathcal{F}_D)$. With part~(2) of Proposition~\ref{p4.3}, we see that $I_x=\Phi(\dda x\cap B_D)\in\mathcal{R}(X')$. This implies that $I_x$ is a directed subset of $D'$, and hence $\bigvee I_x \in D'$.
\end{proof}

\begin{theorem}\label{t4.6}
  Let $(D,\leq)$  be a continuous domain with a basis~$B_D$ and $(D',\leq')$   a continuous domain with a basis~$B_{D'}$.  For any Scott continuous function $f : D \rightarrow D'$, define a relation $\Phi_f\subseteq\mathcal{F}_D\times D'$ by
  \begin{equation}\label{e4.6}
   F\Phi_f x'\Leftrightarrow  x'\ll' f(\bigvee F).
   \end{equation}
  Then $\Phi_f$
is an approximable mapping from $(B_D,\tau\circ\gamma,\mathcal{F}_D)$ to  $(B_{D'},\tau'\circ\gamma',\mathcal{F'}_{D'})$.

Conversely, if $\Phi$ is an approximable mapping from $(B_D,\tau\circ\gamma,\mathcal{F}_D)$ to $(B_{D'},\tau'\circ\gamma',\mathcal{F'}_{D'})$, then by setting
 \begin{equation}\label{e4.7}
 f_{\Phi}(x) = \bigvee\set{x'\in D' }{ (\exists F\in \mathcal{F}_D)(F\subseteq \dda x\cap B_D, F\Phi x')}.
 \end{equation}
One obtains a Scott continuous function $f_{\Phi} : D \rightarrow D'$.

Moreover, $f=f_{{\Phi}_f}$ and $\Phi=\Phi_{f_{\Phi}}$.
\end{theorem}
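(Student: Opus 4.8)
The plan is to treat the two constructions and their mutual inversions in turn, relying throughout on the identity $\langle F\rangle=\dda\bigvee F\cap B_D$ valid for every $F\in\mathcal{F}_D$, the fact that $\bigvee F\in F$, the Scott-continuity (hence monotonicity) of $f$, and above all the interpolation property of the way-below relation. First I would verify that $\Phi_f$ satisfies (AM1)--(AM3). Condition (AM1) is immediate: from $F\Phi_f F'$ one gets $\bigvee F'\ll' f(\bigvee F)$, and any $y'\in\langle F'\rangle$ obeys $y'\ll'\bigvee F'\leq' f(\bigvee F)$, so $y'\ll' f(\bigvee F)$. Condition (AM2) reduces to monotonicity, since $F\sqsubseteq\langle F_1\rangle$ forces $\bigvee F\leq\bigvee F_1$ and hence $f(\bigvee F)\leq' f(\bigvee F_1)$. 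The only substantive case is (AM3): starting from $M'\ll' f(\bigvee F)$, I would write $f(\bigvee F)=\bigvee\set{f(d)}{d\ll\bigvee F,\ d\in B_D}$ as a directed supremum by Scott-continuity, interpolate in $D'$ to push each $m'$ strictly below some $f(d)$, and collect a single $d^{*}\ll\bigvee F$ in $B_D$ with $M'\ll' f(d^{*})$; a further interpolation in $D'$ yields $z'\in B_{D'}$ with $M'\ll' z'\ll' f(d^{*})$. Then $G=\{d^{*}\}$ and $G'=\{z'\}$ witness (AM3).

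For the converse direction I would first note that $f_{\Phi}$ is well-defined by Lemma~\ref{l4.5}, and that $f_{\Phi}(x)=\bigvee\phi_{\Phi}(\dda x\cap B_D)$, where $\phi_{\Phi}$ is the Scott-continuous map attached to $\Phi$ in Theorem~\ref{t4.4}. Thus $f_{\Phi}$ factors as the composite of the order isomorphism $x\mapsto\dda x\cap B_D$ of Theorem~\ref{t3.18}, the Scott-continuous $\phi_{\Phi}$, and the order isomorphism $U\mapsto\bigvee U$; since order isomorphisms of dcpos preserve directed suprema, $f_{\Phi}$ is Scott-continuous. (Alternatively one argues directly, using interpolation to write $\dda x\cap B_D=\bigcup_{d\in D_0}(\dda d\cap B_D)$ as a directed union for any directed $D_0$ with $\bigvee D_0=x$, and then invoking Proposition~\ref{p3.13}.)

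It remains to verify $f=f_{\Phi_f}$ and $\Phi=\Phi_{f_{\Phi}}$. For the former, every $x'$ contributing to $f_{\Phi_f}(x)$ satisfies $x'\ll' f(\bigvee F)\leq' f(x)$ for some $F\subseteq\dda x\cap B_D$ (using $\bigvee F\ll x$), giving $f_{\Phi_f}(x)\leq' f(x)$; conversely, running over singletons $F=\{d\}$ with $d\ll x$ shows $\dda f(d)\cap B_{D'}\subseteq I_x$, whence $f(x)=\bigvee_{d\ll x}f(d)\leq' f_{\Phi_f}(x)$ by continuity of $D'$. For $\Phi=\Phi_{f_{\Phi}}$, note that $f_{\Phi}(\bigvee F)=\bigvee\Phi(\langle F\rangle)$. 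If $F\Phi x'$, then (AM3) produces $G\subseteq\langle F\rangle$ and $G'$ with $x'\ll'\bigvee G'$ and $\bigvee G'\in\Phi(\langle F\rangle)$, so $x'\ll' f_{\Phi}(\bigvee F)$; conversely, from $x'\ll'\bigvee\Phi(\langle F\rangle)$ over the directed set $\Phi(\langle F\rangle)$, interpolation yields $w'\in\Phi(\langle F\rangle)$ with $x'\ll' w'$, and then (AM2) gives $F\Phi w'$, after which (AM1) places $x'$ in $\dda w'\cap B_{D'}$ to conclude $F\Phi x'$.

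The main obstacle I anticipate is precisely the pair of strict-estimate steps---(AM3) in the forward construction and the backward half of $\Phi=\Phi_{f_{\Phi}}$---where a mere $\leq'$ extracted from ``way-below a directed supremum'' must be upgraded to a strict $\ll'$ by an extra application of interpolation; keeping track of the distinguished element $\bigvee F\in F$ and the identity $\langle F\rangle=\dda\bigvee F\cap B_D$ is what makes these upgrades go through.
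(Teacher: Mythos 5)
Your proposal is correct and follows essentially the same route as the paper's proof: (AM1)--(AM3) via the identity $\langle F\rangle=\dda\bigvee F\cap B_D$ and interpolation, well-definedness of $f_{\Phi}$ via Lemma~\ref{l4.5}, and the two inversion identities by squeezing sups between $\ll'$ and $\leq'$. The one genuine variation is that you obtain Scott-continuity of $f_{\Phi}$ by factoring it through the isomorphisms of Theorem~\ref{t3.18} and the map $\phi_{\Phi}$ of Theorem~\ref{t4.4}, where the paper verifies $I_{\bigvee S}\subseteq\bigcup_{d\in S}I_d$ directly; your version is slightly cleaner, and your explicit flagging of the $\leq'$-to-$\ll'$ upgrades (via roundness of $\Phi(\langle F\rangle)$ and an extra interpolation) is in fact more careful than the paper's chain of equivalences.
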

\begin{proof} For any Scott-continuous function $f : D \rightarrow D'$, we check that the relation defined by equation~(\ref{e4.6}) is an approximable mapping from $(B_D,\tau\circ\gamma,\mathcal{F}_D)$ to $(B_{D'},\tau'\circ\gamma',\mathcal{F'}_{D'})$ in the following:

For (AM1), suppose that $F'\in \mathcal{F}_{D'}'$ and $F\Phi_f F'$. Then $F'\ll' f(\bigvee F)$. Since $F'$ is finite and $\bigvee F'\in D'$, it follows that $\bigvee F'\ll'f(\bigvee F)$. That is $\langle F'\rangle\ll'f(\bigvee F)$. This implies that $F\Phi_f \langle F'\rangle$.

For (AM2), let $F\sqsubseteq \langle F_1\rangle$ and $F\Phi_f M'$, where $M'$ is a finite subset of $D'$. Then $F\ll \bigvee F_1$ and $M'\ll'f(\bigvee F)$. As $f$ is order-preserving, $M'\ll'f(\bigvee F)$. This means that  $F_1\Phi_f M'$.

 For (AM3), suppose that $F\Phi_f M'$ with $M\sqsubseteq D'$. Then $M'\ll' f(\bigvee F)$. By the interpolation property of $\ll'$, there exists some $d'\in B_{D'}$ such that $M'\ll' d'\ll' f(\bigvee F)$.  Note that $f(\bigvee F)=f(\bigvee (\dda (\bigvee F)))=\bigvee (\dda f(\bigvee F))$, we have some $d\in \dda (\bigvee F)\cap B_D$ with $d'\ll' f(d)$. Set $G=\{d\}$ and $G'=\{d'\}\cup M'$. Thus $G\in \mathcal{F}$ and $G'\in \mathcal{F'}$ such that $G\sqsubseteq \langle F\rangle$, $M'\subseteq \langle G'\rangle$ with $G\Phi_f G'$.

 Given an approximable mapping  $\Phi$ from $(B_D,\tau\circ\gamma,\mathcal{F}_D)$ to  $(B_{D'},\tau'\circ\gamma',\mathcal{F'}_{D'})$. By Lemma~\ref{l4.5}, we see that the function~$f_{\Phi}$ defined by equation~(\ref{e4.7}) is well-defined and  $f_{\Phi}(x)=I_x$, for any $x\in D$. We now prove that $f_{\Phi}$ is Scott-continuous by checking that $f_{\Phi}(\bigvee S)=\bigvee f_{\Phi} (S)$ for any directed subset $S$ of $D$.

   In fact it is clear that $I_x\subseteq I_y$ for any $x,y \in D$ with $x\leq y$. So that $f_{\Phi}$ is order-preserving, and then $\bigvee f_{\Phi} (S)\leq f_{\Phi}(\bigvee S)$.
   Conversely, since $\bigvee f_{\Phi} (S)= \bigvee \set{\bigvee I_d}{d\in S}=\bigvee (\bigcup{_{d\in S}}I_d)$ and $f_{\Phi}(\bigvee S)=\bigvee I_{\bigvee S}$, to complete the proof, it suffices to show that $I_{\bigvee S}\subseteq\bigcup{_{d\in S}}I_d$. Let $x'\in  I_{\bigvee S}$. Then $F \Phi x'$ for some $F\in \mathcal{F}$ with $F\sqsubseteq \dda \bigvee S\cap B_D$. Because $D'$ is a continuous domain, $\bigvee F\ll' \bigvee S$. As $S$ is directed, there exists some $d\in S$ with $\bigvee F\ll' d$, which implies that $F\sqsubseteq \dda d \cap B_D$. Thus $x'\in I_d$, and then $I_{\bigvee S}\subseteq\bigcup{_{d\in S}}I_d$.

 For any $x\in D$, we have
 \begin{align*}
f_{\phi_{f}}(x)&=\bigvee\set{x'\in B_{D'}}{(\exists F\in \mathcal{F})(F\sqsubseteq \dda x\cap B_D, F\phi_{f}x'})\\
&=\bigvee\set{x'\in B_{D'}}{(\exists F\in \mathcal{F})(F\sqsubseteq \dda x\cap B_D,  x'\ll'f(\bigvee F)}\\
&=\bigvee \set{x'\in B_{D'}}{(\exists y\in B_D)( y\ll x, x'\ll'f(y))}\\
&=\bigvee (\dda f(x)\cap B_{D'})\\
&=f(x)
\end{align*}
 This implies that $f=f_{\phi_{f}}$.

 For any $F\sqsubseteq \mathcal{F}_D$ and $x'\in D'$, we have
\begin{align*}
F\Phi_{f_{\Phi}}x'&\Leftrightarrow x'\ll'f_{\Phi}(\bigvee F)\\
&\Leftrightarrow x'\ll'\bigvee\set{y'\in L'}{(\exists F_1\in\mathcal{F}_D)(F_1\subseteq \dda \bigvee F\cap B_D, F\Phi y'})\\
&\Leftrightarrow (\exists y'\in D', \exists F_1\in\mathcal{F}_D)(F_1\sqsubseteq \dda \bigvee F\cap B_D,F\Phi y', x'\ll y')\\
&\Leftrightarrow ( \exists F_1\in\mathcal{F}_D)(F_1\sqsubseteq\langle F\rangle,F_1\Phi x')\\
&\Leftrightarrow F\Phi x'.
\end{align*}
 This implies that $\Phi=\Phi_{f_{\Phi}}$.
\end{proof}
\begin{proposition}\label{p4.7}
$\mathbb{F}$-augmented generalized closure spaces with approximable mappings form a category~${\bf FGC}$.
\end{proposition}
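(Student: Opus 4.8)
The plan is to supply the three ingredients of a category and then check the axioms: an identity approximable mapping on each object, a composite of any two composable approximable mappings that is again an approximable mapping, and the associativity and unit laws. My strategy is to define this data concretely at the level of relations while borrowing the algebraic laws from the bijection of Theorem~\ref{t4.4}, which already matches approximable mappings with Scott-continuous functions.

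First I would fix the identities. For an object $(X,\tau\circ\gamma,\mathcal{F})$ define $\id_X\subseteq\mathcal{F}\times X$ by $F\,\id_X\,x\Leftrightarrow x\in\langle F\rangle$. Conditions~(AM1) and (AM2) are immediate from Proposition~\ref{p3.4} and the monotonicity of $\tau\circ\gamma$, and (AM3) follows by interpolating a member of $\mathcal{F}$ inside $\langle F\rangle$ as in Definition~\ref{d3.6}; thus $\id_X$ is an approximable mapping. Moreover, comparing with equation~(\ref{e4.3}) and using Proposition~\ref{p3.12}, one sees that $\phi_{\id_X}=\id_{\mathcal{R}(X)}$.

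Next I would define, for $\Theta\colon X\ra X'$ and $\Theta'\colon X'\ra X''$, the composite relation $\Theta'\circ\Theta\subseteq\mathcal{F}\times X''$ by
$$F(\Theta'\circ\Theta)x''\Leftrightarrow(\exists F'\in\mathcal{F}')(F\Theta F'\text{ and }F'\Theta' x'').$$
The crux is the single identity $\Theta'\circ\Theta=\Theta_{\phi_{\Theta'}\circ\phi_{\Theta}}$, from which everything follows. To prove it I would first compute $\phi_\Theta(\langle F\rangle)=\Theta(F)$: by Proposition~\ref{p4.2}(1), the existence of $F_1\subseteq\langle F\rangle$ with $F_1\Theta y'$ is equivalent to $F\Theta y'$, so $\phi_\Theta(\langle F\rangle)=\set{y'}{F\Theta y'}=\Theta(F)$. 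Since $\Theta(F)$ is $\mathbb{F}$-regular open by Proposition~\ref{p4.3}(1), applying equation~(\ref{e4.3}) again gives $\phi_{\Theta'}(\Theta(F))=\set{x''}{(\exists F'\in\mathcal{F}')(F'\subseteq\Theta(F),\,F'\Theta' x'')}$, and $F'\subseteq\Theta(F)$ is exactly $F\Theta F'$. Comparing with equation~(\ref{e4.4}) applied to $\psi=\phi_{\Theta'}\circ\phi_{\Theta}$ yields $F\,\Theta_{\psi}\,x''\Leftrightarrow F(\Theta'\circ\Theta)x''$, i.e. $\Theta'\circ\Theta=\Theta_{\psi}$.

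Finally, the category axioms fall out formally. Because $\phi_{\Theta'}\circ\phi_{\Theta}$ is Scott-continuous (a composite of Scott-continuous maps) and $\Theta_{\psi}$ is an approximable mapping for every Scott-continuous $\psi$ (Theorem~\ref{t4.4}), the identity above shows $\Theta'\circ\Theta$ is an approximable mapping, settling closure under composition. Applying the $\phi_{\Theta_\psi}=\psi$ clause of Theorem~\ref{t4.4} gives the functoriality relation $\phi_{\Theta'\circ\Theta}=\phi_{\Theta'}\circ\phi_{\Theta}$, and together with $\phi_{\id_X}=\id_{\mathcal{R}(X)}$ and the injectivity of $\Theta\mapsto\phi_\Theta$ (from $\Theta_{\phi_\Theta}=\Theta$), the associativity and unit laws for $\circ$ are transported from the evident associativity and unit laws of ordinary function composition of Scott-continuous maps. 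I expect the only genuine work to be the displayed identity $\Theta'\circ\Theta=\Theta_{\phi_{\Theta'}\circ\phi_{\Theta}}$, in particular the computation $\phi_\Theta(\langle F\rangle)=\Theta(F)$ via Proposition~\ref{p4.2}(1); once this is in hand, every axiom is a purely formal consequence of Theorem~\ref{t4.4}.
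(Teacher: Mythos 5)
Your proof is correct, but it takes a genuinely different route from the paper. The paper proceeds directly: it defines the same identity relation $\mathrm{id}_X$ and the same composite relation, then asserts that the verification of (AM1)--(AM3) for the composite is a ``routine check'' and that associativity follows ``as for traditional relation composition.'' You instead transport every axiom through the bijection $\Theta\mapsto\phi_{\Theta}$ of Theorem~\ref{t4.4}, with all the real work concentrated in the single identity $\Theta'\circ\Theta=\Theta_{\phi_{\Theta'}\circ\phi_{\Theta}}$, whose key ingredient $\phi_{\Theta}(\langle F\rangle)=\Theta(F)$ is exactly the computation the paper already performs (for a different purpose) at the end of the proof of Theorem~\ref{t4.4} when establishing $\Theta_{\phi_{\Theta}}=\Theta$. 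Your approach buys two things. First, closure of approximable mappings under composition comes for free from the ``$\Theta_{\psi}$ is approximable for every Scott-continuous $\psi$'' half of Theorem~\ref{t4.4}, whereas the paper's direct check of (AM1)--(AM3) for the composite, though elementary, genuinely requires Proposition~\ref{p4.2} and a directedness argument. Second, associativity in the paper is not quite the ``traditional'' relation-composition argument it claims to be: since $F(\Theta'\circ\Theta)G''$ requires, for each element of $G''$, a possibly different witness $G'\in\mathcal{F}'$, one must invoke directedness of $\{\langle G'\rangle : F\Theta G'\}$ to merge these witnesses before the two bracketings can be matched; your injectivity-plus-functoriality argument sidesteps this entirely. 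As a bonus, the identity $\phi_{\Theta'\circ\Theta}=\phi_{\Theta'}\circ\phi_{\Theta}$ you derive is precisely the functoriality of $\mathcal{G}$ needed later in Lemma~\ref{l4.8}, so your proof does double duty. The only mild caveat is notational: the paper writes the composite of $\Theta:X\ra X'$ and $\Theta':X'\ra X''$ as $\Theta\circ\Theta'$ while you write $\Theta'\circ\Theta$; the underlying relation is the same.
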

\begin{proof}
Let $\Theta$ be an
approximable mapping from $(X,\tau\circ\gamma,\mathcal{F})$ to $(X',\tau'\circ\gamma',\mathcal{F'})$ and $\Theta'$ be an
approximable mapping from $(X',\tau'\circ\gamma',\mathcal{F'})$ to $(X'',\tau''\circ\gamma'',\mathcal{F''})$. Define $\Theta\circ \Theta'\subseteq \mathcal{F}\times X''$ by
\begin{equation}\label{e4.8}
F(\Theta\circ \Theta')x''\Leftrightarrow (\exists G\in\mathcal{F'})(F\Theta G,G\Theta'x''),
\end{equation}
and id$_X\subseteq \mathcal{F}\times X$ by
\begin{equation}\label{e4.9}
F\text{id}_X x\Leftrightarrow x\in \langle F\rangle.
\end{equation}
Routine checks verify that $\Theta\circ \Theta'$ is an approximable mapping from $(X,\tau\circ\gamma,\mathcal{F})$ to $(X'',\tau''\circ\gamma'',\mathcal{F''})$ and
 $\text{id}_X$ is an approximable mapping from $(X,\tau\circ\gamma,\mathcal{F})$ to itself.

 Conditions (1) and (2) of Definition~\ref{d4.1} yield that $\text{id}_X$ is the identity morphism of $(X,\tau\circ\gamma,\mathcal{F})$. Using the same argument as checking the associative law of a traditional relation composition, we can easy carry out $\circ$ defined by equation~(\ref{e4.8}) is also associative.
\end{proof}
 Analogously, we can establish a category~{\bf LFGC} of locally consistent $\mathbb{F}$-augmented generalized  closure spaces and a category~{\bf CFGC} of  consistent $\mathbb{F}$-augmented generalized  closure spaces. They are full subcategories of ${\bf FGC}$.

\begin{lemma}\label{l4.8}
$\mathcal{G}: {\bf FGC}\rightarrow {\bf CD}$ is a functor which maps every $\mathbb{F}$-augmented generalized  closure space $(X,\tau\circ\gamma,\mathcal{F})$ to $(\mathcal{R}(X),\subseteq)$ and approximable mapping $\Theta:X\rightarrow X'$ to $\phi_{\Theta}: \mathcal{R}(X)\rightarrow \mathcal{R}(X')$, where $\phi_{\Theta}$ is defined by equation~(\ref{e4.3}).
\end{lemma}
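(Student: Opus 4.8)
The plan is to verify the two functor axioms: that $\mathcal{G}$ preserves identity morphisms and that it preserves composition. The object assignment and morphism assignment are already given, and Theorem~\ref{t3.15} guarantees that each $(\mathcal{R}(X),\subseteq)$ is a genuine object of {\bf CD} (a continuous domain), while Theorem~\ref{t4.4} guarantees that each $\phi_{\Theta}$ is a genuine morphism (a Scott-continuous function). So the only remaining content is functoriality, namely the two equations $\mathcal{G}(\mathrm{id}_X)=\mathrm{id}_{\mathcal{R}(X)}$ and $\mathcal{G}(\Theta\circ\Theta')=\mathcal{G}(\Theta)\circ\mathcal{G}(\Theta')=\phi_{\Theta'}\circ\phi_{\Theta}$, where composition of approximable mappings is given by equation~(\ref{e4.8}).

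\medskip
\noindent\textbf{Identity.} First I would check $\phi_{\mathrm{id}_X}=\mathrm{id}_{\mathcal{R}(X)}$. Unwinding equation~(\ref{e4.3}) with the relation $\mathrm{id}_X$ from equation~(\ref{e4.9}), for any $U\in\mathcal{R}(X)$ we get
$$\phi_{\mathrm{id}_X}(U)=\set{x\in X}{(\exists F\in\mathcal{F})(F\subseteq U,\ x\in\langle F\rangle)}.$$
By Proposition~\ref{p3.12}, $U$ is exactly the union $\bigcup\set{\langle F\rangle}{F\in\mathcal{F},\,F\subseteq U}$, so the displayed set is precisely $U$. This gives $\phi_{\mathrm{id}_X}=\mathrm{id}_{\mathcal{R}(X)}$ directly.

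\medskip
\noindent\textbf{Composition.} The substantive step is to show $\phi_{\Theta\circ\Theta'}=\phi_{\Theta'}\circ\phi_{\Theta}$ as functions $\mathcal{R}(X)\to\mathcal{R}(X'')$. Fix $U\in\mathcal{R}(X)$; I would prove the two set-inclusions between $\phi_{\Theta\circ\Theta'}(U)$ and $\phi_{\Theta'}(\phi_{\Theta}(U))$. Unwinding the composite relation from equation~(\ref{e4.8}), $x''\in\phi_{\Theta\circ\Theta'}(U)$ means there is $F\in\mathcal{F}$ with $F\subseteq U$ and some $G\in\mathcal{F'}$ with $F\,\Theta\,G$ and $G\,\Theta'\,x''$. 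On the other side, $x''\in\phi_{\Theta'}(\phi_{\Theta}(U))$ means there is some $G'\in\mathcal{F'}$ with $G'\subseteq\phi_{\Theta}(U)$ and $G'\,\Theta'\,x''$, where $G'\subseteq\phi_{\Theta}(U)$ in turn unpacks (using that $\phi_\Theta(U)$ is an $\mathbb{F}$-regular open set by Proposition~\ref{p4.3}(2), together with Proposition~\ref{p3.11}(2)) into the existence of a single $F\in\mathcal{F}$ with $F\subseteq U$ and $G'\subseteq\langle F_G\rangle$ for suitable witnesses. The forward inclusion uses condition~(AM1) to pass from $F\,\Theta\,G$ to $F\,\Theta\,\langle G\rangle$, so that $G$ lands in $\phi_\Theta(U)$; the reverse inclusion reassembles a finite witness $G'\subseteq\phi_\Theta(U)$ back into a pair in $\mathcal{F}$ related by $\Theta$, invoking Proposition~\ref{p4.2}(1) to replace $G'$ by some $G\subseteq\langle F\rangle$ with $F\subseteq U$ and $F\,\Theta\,G$.

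\medskip
The main obstacle is this composition inclusion, specifically the bookkeeping needed to guarantee that the intermediate finite set $G\in\mathcal{F'}$ (respectively $G'$) witnessing the composite really sits inside the $\mathbb{F}$-regular open set $\phi_\Theta(U)$, i.e.\ that $G\subseteq\phi_\Theta(U)$ rather than merely $G\,\Theta'$-related to $x''$. Here the critical tools are the finite-interpolation built into the definition of approximable mapping---conditions~(AM1)--(AM3)---and the characterization of $\mathbb{F}$-regular open sets via directed unions of the $\langle F\rangle$ (Propositions~\ref{p3.11} and~\ref{p3.12}), which let me move freely between "$x''$ is related to $U$ through a composite" and "$x''$ lies in the image of the already-constructed regular open set $\phi_\Theta(U)$." Associativity of $\circ$ and the unit laws at the level of relations were already noted in Proposition~\ref{p4.7}, so once the two displayed functorial equations are verified the proof is complete.
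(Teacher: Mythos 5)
Your proposal is correct and follows essentially the same route as the paper: well-definedness is delegated to Theorems~\ref{t3.15} and~\ref{t4.4}, the identity law is the same computation via $U=\bigcup\set{\langle F\rangle}{F\in\mathcal{F},F\subseteq U}$, and the composition law is the same chain of equivalences, where you usefully spell out the one step the paper leaves implicit (passing between ``$\exists F\subseteq U$ with $F\Theta G$, $G\Theta' x''$'' and ``$\exists G'\subseteq\phi_\Theta(U)$ with $G'\Theta' x''$'' via (AM1), (AM2) and Proposition~\ref{p3.11}(2)). The only blemish is the notational slip $\mathcal{G}(\Theta\circ\Theta')=\mathcal{G}(\Theta)\circ\mathcal{G}(\Theta')$ in your opening paragraph, which contradicts the correct equation $\phi_{\Theta\circ\Theta'}=\phi_{\Theta'}\circ\phi_{\Theta}$ that you actually prove.
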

\begin{proof}
By Theorems~\ref{t3.15} and~\ref{t4.4}, $\mathcal{G}$ is well-defined.
For any $U\in\mathcal{R}(X)$, we have
\begin{align*}
\mathcal{G}(\text{id}_{X})(U)&=\phi_{\text{id}_X}(U) \\
&=\set{x\in X}{(\exists F\in \mathcal{F})(F\sqsubseteq U, x\in \langle F\rangle})\\
&=\bigcup\set{\langle F\rangle}{(\exists F\in \mathcal{F})(F\sqsubseteq U)}\\
&=U.
\end{align*}
This implies that $\mathcal{G}$ preserves the identity morphism.

Let $\Theta: X\rightarrow X', \Theta': X'\rightarrow X''$ be two approximable mappings.
For any $U\in\mathcal{R}(X)$ and $x''\in X''$, we have
\begin{align*}
x''\in \mathcal{G}(\Theta'\circ\Theta)(U)&\Leftrightarrow x''\in f_{\Theta'\circ\Theta}(U) \\
&\Leftrightarrow (\exists F\in \mathcal{F})(F\sqsubseteq U,F(\Theta'\circ\Theta)x'' )\\
&\Leftrightarrow ( F\in \mathcal{F},\exists G\in \mathcal{F'})(F\sqsubseteq U,F\Theta G,G\Theta'x'')\\
&\Leftrightarrow (\exists G\in \mathcal{F'})( G\sqsubseteq \phi_{\Theta}(U),G\Theta'x'')\\
&\Leftrightarrow (\exists G\in \mathcal{F'})( G\sqsubseteq  \phi{\Theta}(U),G\Theta' x'')\\
&\Leftrightarrow x''\in \phi_{\Theta'}(\mathcal{G}(\Theta)(U))\\
&\Leftrightarrow x''\in\mathcal{G}(\Theta')(\mathcal{G}(\Theta)(U)).
\end{align*}
This implies that $\mathcal{G}(\Theta'\circ\Theta)=\mathcal{G}(\Theta')\circ\mathcal{G}(\Theta)$, and then $\mathcal{G}$ preserves the composition.
\end{proof}
\begin{remark}\label{4.9}
Corresponding to the functor $\mathcal{G}: {\bf FGC}\rightarrow {\bf CD}$, we can also define a functor~$\mathcal{H}$ from ${\bf CD}$ to ${\bf FGC}$ as follows: for any continuous domain~$(D,\leq)$ with a basis~$B_D$, $$\mathcal{H}(D)=(B_D,\tau\circ\gamma,\mathcal{F}_D),$$
and for any Scott-continuous functions~
$f : D \rightarrow D'$, $$\mathcal{H}(f)=\Phi_f,$$
where $\Phi_f$ is defined  by equation~(\ref{e4.6}). With Proposition~\ref{p3.16}
and Theorem~\ref{t4.6},  the above two functions are well-defined. Using a similar way as in the proof of Lemma~\ref{l4.8}, one can show that $\mathcal{H}$ is a functor.
\end{remark}
Instead of directly proving there exists a pair of natural isomorphisms we use an alternative method to show the equivalence of categories ${\bf FGC}$ and ${\bf CD}$.
\begin{lemma}\label{l4.10}\cite{maclane71}
The following conditions on the categories ${\bf C}$ and ${\bf D}$ are equivalent:
\begin{enumerate}[(1)]
\item  ${\bf C}$ and ${\bf D}$ are categorically equivalent.
 \item There exists a functor~$\mathcal{G}:{\bf C}\rightarrow{\bf D}$ such that $\mathcal{G}$ is full, faithful and essentially surjective on objects, that is for every object~$D$ of ${\bf D}$, there exists some object $C$ of ${\bf C}$ such that $\mathcal{G}(C)\cong D$.
 \end{enumerate}
\end{lemma}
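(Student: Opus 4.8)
The plan is to prove the two implications separately, noting at the outset that essentially all the substance lies in the direction $(2)\Rightarrow(1)$; the converse is a routine consequence of the definition of equivalence, and the central technical device throughout is that a full and faithful functor induces a bijection on each hom-set, so that morphisms and their properties can be transported back and forth between ${\bf C}$ and ${\bf D}$.

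For $(1)\Rightarrow(2)$ I would start from an equivalence, i.e.\ functors $\mathcal{G}:{\bf C}\rightarrow{\bf D}$ and $\mathcal{H}:{\bf D}\rightarrow{\bf C}$ with natural isomorphisms $\eta:\mathrm{id}_{{\bf C}}\cong\mathcal{H}\mathcal{G}$ and $\varepsilon:\mathcal{G}\mathcal{H}\cong\mathrm{id}_{{\bf D}}$. Essential surjectivity of $\mathcal{G}$ is immediate, since $\varepsilon_D:\mathcal{G}(\mathcal{H}(D))\rightarrow D$ is an isomorphism for every object $D$. For faithfulness, if $\mathcal{G}(f)=\mathcal{G}(g)$ for parallel arrows $f,g:A\rightarrow B$, I would apply $\mathcal{H}$ and use naturality of $\eta$ to obtain $\eta_B\circ f=\eta_B\circ g$, whence $f=g$ because $\eta_B$ is invertible; the symmetric argument shows $\mathcal{H}$ is faithful too. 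Fullness of $\mathcal{G}$ then follows: given $h:\mathcal{G}(A)\rightarrow\mathcal{G}(B)$, I would set $f=\eta_B^{-1}\circ\mathcal{H}(h)\circ\eta_A$, check via naturality that $\mathcal{H}(\mathcal{G}(f))=\mathcal{H}(h)$, and conclude $\mathcal{G}(f)=h$ from faithfulness of $\mathcal{H}$.

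For $(2)\Rightarrow(1)$ the plan is to build a quasi-inverse $\mathcal{H}$ by hand. First, using essential surjectivity together with the axiom of choice, I would select for each object $D$ of ${\bf D}$ an object $\mathcal{H}(D)$ of ${\bf C}$ and an isomorphism $\varepsilon_D:\mathcal{G}(\mathcal{H}(D))\rightarrow D$. To define $\mathcal{H}$ on a morphism $g:D\rightarrow D'$, I would consider $\varepsilon_{D'}^{-1}\circ g\circ\varepsilon_D:\mathcal{G}(\mathcal{H}(D))\rightarrow\mathcal{G}(\mathcal{H}(D'))$ and use full faithfulness of $\mathcal{G}$ to produce the unique arrow $\mathcal{H}(g):\mathcal{H}(D)\rightarrow\mathcal{H}(D')$ with $\mathcal{G}(\mathcal{H}(g))=\varepsilon_{D'}^{-1}\circ g\circ\varepsilon_D$. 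Functoriality of $\mathcal{H}$ (preservation of identities and composites) then falls out of this uniqueness and the faithfulness of $\mathcal{G}$, and by construction $\varepsilon$ is a natural isomorphism $\mathcal{G}\mathcal{H}\cong\mathrm{id}_{{\bf D}}$.

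It remains to produce $\eta:\mathrm{id}_{{\bf C}}\cong\mathcal{H}\mathcal{G}$. For each object $C$ the isomorphism $\varepsilon_{\mathcal{G}(C)}^{-1}:\mathcal{G}(C)\rightarrow\mathcal{G}(\mathcal{H}\mathcal{G}(C))$ lifts, by full faithfulness, to a unique $\eta_C:C\rightarrow\mathcal{H}\mathcal{G}(C)$ with $\mathcal{G}(\eta_C)=\varepsilon_{\mathcal{G}(C)}^{-1}$; since a full faithful functor reflects isomorphisms, each $\eta_C$ is invertible, and naturality is verified by applying $\mathcal{G}$ to the candidate square and reducing to the naturality of $\varepsilon$ already in hand. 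The hard part will be precisely this $(2)\Rightarrow(1)$ construction: the two delicate points are the appeal to choice in simultaneously selecting all the pairs $(\mathcal{H}(D),\varepsilon_D)$, and the repeated, careful use of the hom-set bijection supplied by full faithfulness to define $\mathcal{H}$ on arrows and to transport both isomorphisms and naturality squares between the two categories. Once those liftings are in place, everything else is uniqueness-driven bookkeeping.
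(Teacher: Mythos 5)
The paper gives no proof of this lemma at all: it is quoted as a known result from Mac Lane \cite{maclane71}, so there is nothing internal to compare against. Your argument is the standard textbook proof of that result (essentially the one in Mac Lane, IV.4) and is correct in both directions: the $(1)\Rightarrow(2)$ part correctly extracts fullness and faithfulness from the two natural isomorphisms, and the $(2)\Rightarrow(1)$ part correctly builds the quasi-inverse $\mathcal{H}$ by choosing objects and isomorphisms $\varepsilon_D$ and lifting along the hom-set bijections, with the only genuinely delicate points (the appeal to choice and the verification that $\varepsilon$ and $\eta$ are natural) identified and handled appropriately.
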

\begin{theorem}\label{t4.11}
${\bf FGC}$ and ${\bf CD}$ are categorically equivalent.
\end{theorem}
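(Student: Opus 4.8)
The plan is to invoke the abstract criterion of Lemma~\ref{l4.10} applied to the functor $\mathcal{G}:{\bf FGC}\rightarrow{\bf CD}$ constructed in Lemma~\ref{l4.8}. Thus I would reduce the claim to verifying that $\mathcal{G}$ is essentially surjective on objects, full, and faithful. All three verifications are engineered to fall out of the representation and correspondence results already established, so no genuinely new argument should be required.

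For essential surjectivity, I would take an arbitrary object $(D,\leq)$ of ${\bf CD}$, that is, a continuous domain with some chosen basis $B_D$. Proposition~\ref{p3.16} guarantees that $(B_D,\tau\circ\gamma,\mathcal{F}_D)$ is an $\mathbb{F}$-augmented generalized closure space, hence an object of ${\bf FGC}$, and by the definition of $\mathcal{G}$ we have $\mathcal{G}(B_D,\tau\circ\gamma,\mathcal{F}_D)=(\mathcal{R}(B_D),\subseteq)$. The Representation Theorem~\ref{t3.18} then yields an order isomorphism $D\cong(\mathcal{R}(B_D),\subseteq)$, which is precisely an isomorphism $\mathcal{G}(B_D,\tau\circ\gamma,\mathcal{F}_D)\cong D$ in ${\bf CD}$. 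This establishes essential surjectivity on objects.

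For fullness and faithfulness, the key observation is that, by Lemma~\ref{l4.8}, the action of $\mathcal{G}$ on a morphism $\Theta$ is exactly the assignment $\Theta\mapsto\phi_{\Theta}$ of Theorem~\ref{t4.4}. Fixing two objects, I would consider the induced map on hom-sets sending each approximable mapping $\Theta$ to the Scott-continuous function $\phi_{\Theta}$. Theorem~\ref{t4.4} shows this assignment admits a two-sided inverse $\phi\mapsto\Theta_{\phi}$, via the identities $\phi_{\Theta_{\phi}}=\phi$ and $\Theta_{\phi_{\Theta}}=\Theta$. The first identity shows that every Scott-continuous function $\phi:\mathcal{R}(X)\rightarrow\mathcal{R}(X')$ equals $\phi_{\Theta_{\phi}}=\mathcal{G}(\Theta_{\phi})$ and so lies in the image of $\mathcal{G}$, giving fullness; the second identity shows that $\Theta$ is recoverable from $\phi_{\Theta}$, so $\mathcal{G}$ is injective on hom-sets, giving faithfulness. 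With all three conditions in hand, Lemma~\ref{l4.10} delivers the categorical equivalence of ${\bf FGC}$ and ${\bf CD}$.

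Since every substantive step has already been carried out in the preceding results, I do not anticipate a serious obstacle here; the only point demanding care is the bookkeeping of the direction of the hom-set bijection and confirming that $\mathcal{G}$ acts on morphisms precisely by $\Theta\mapsto\phi_{\Theta}$, so that Theorem~\ref{t4.4} applies verbatim rather than merely in spirit.
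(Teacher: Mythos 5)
Your proposal is correct and follows essentially the same route as the paper: both invoke Lemma~\ref{l4.10}, obtain essential surjectivity from Proposition~\ref{p3.16} together with Theorem~\ref{t3.18}, and derive fullness and faithfulness from the bijection $\Theta\mapsto\phi_{\Theta}$ of Theorem~\ref{t4.4}. The only cosmetic difference is that the paper re-derives faithfulness by a direct hom-set computation rather than quoting the identity $\Theta_{\phi_{\Theta}}=\Theta$, but that computation is exactly the content of the identity you cite.
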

\begin{proof}
According to Theorem~\ref{t3.18}, it suffices to show that the functor~$\mathcal{G}$ defined in Lemma~\ref{l4.8} is full and faithful.

Let $\phi:\mathcal{R}(X)\rightarrow \mathcal{R}(X')$ be a Scott-continuous function. Define a relation  $\Theta_{\phi}\subseteq\mathcal{F}(X)\times X'$ by
 \begin{equation}\label{e4.10}
 F \Theta_{\phi} x'\Leftrightarrow x'\in \phi(\langle F\rangle).
 \end{equation}
Then  by Theorem~\ref{t4.4} , $\Theta_{\phi}$
is an approximable mapping from $(X,\tau\circ\gamma,\mathcal{F})$ to  $(X',\tau'\circ\gamma',\mathcal{F'})$ and $\mathcal{G}(\Theta_{\phi})=\phi_{\Theta_{\phi}}=\phi$.
This implies that $\mathcal{G}$ is full.

 Let $\Theta_1,\Theta_2: X \rightarrow X'$ be two approximable mappings with $\phi_{\Theta_1}=\phi_{\Theta_2}$. For any $F\in \mathcal{F}$, since
\begin{align*}
\set{x'\in X'}{F\Theta_1 x'}&=\set{x'\in X'}{(\exists G\in\mathcal{F})(G\sqsubseteq F ,G\Theta_1 x')}\\
&=\phi_{\Theta_1}(\langle F\rangle)\\
&=\phi_{\Theta_2}(\langle F\rangle)\\
&=\set{x'\in X'}{(\exists G\in\mathcal{F})(G\sqsubseteq F ,G\Theta_1 x')}\\
&=\set{x'\in X'}{F\Theta_2 x'},
\end{align*}
it follows that $\Theta_1=\Theta_2$, and hence $\mathcal{G}$ is faithful.
\end{proof}
So far, we have established the equivalence between the category of $\mathbb{F}$-augmented generalized closure spaces and that of continuous domains. This result suggests a novel approach to representing continuous domains by means of closure spaces.

It is worth noting that the category~${\bf LD}$ of continuous $L$-domains and the category~${\bf BCD}$ of  continuous bounded-complete domains are two full subcategories of ${\bf FGC}$. Then based on Theorems~\ref{t3.24} and~\ref{t3.27}, we  have
(1) ${\bf LFGC}$ and ${\bf LD}$ are categorically equivalent, and
(2)  ${\bf CFGC}$ and ${\bf BCD}$ are categorically equivalent.
Consequently, both ${\bf LFGC}$ and ${\bf CFGC}$ are Cartesian closed categories.

\bibliographystyle{plain}

\end{document}